\newtheorem{corollary}{Corollary}[]
\newtheorem{lemma}[]{Lemma}
\newtheorem{assumption}[]{Assumption}
\newtheorem{theorem}[]{Theorem}
\newcommand{\utwi}[1]{\mbox{\boldmath $#1$}}
\newcommand{\bflambda}{{\mbox{\boldmath $\lambda$}}}
\newcommand{\cC}{{\mbox{$\mathcal{C}$}}}
\newcommand{\cN}{{\cal N}}
\newcommand{\cI}{{\cal I}}
\newcommand{\cB}{{\cal B}}
\newcommand{\cY}{{\cal Y}}
\newcommand{\bc}{{\bf c}}
\newcommand{\ba}{{\bf a}}
\newcommand{\bb}{{\bf b}}
\newcommand{\bm}{{\bf m}}
\newcommand{\bx}{{\bf x}}
\newcommand{\bv}{{\bf v}}
\newcommand{\bw}{{\bf w}}
\newcommand{\bz}{{\bf z}}
\newcommand{\by}{{\bf y}}
\newcommand{\bA}{{\bf A}}
\newcommand{\bB}{{\bf B}}
\newcommand{\bG}{{\bf G}}
\newcommand{\bM}{{\bf M}}
\newcommand{\bI}{{\bf I}}
\newcommand{\bell}{{\utwi{\ell}}}
\newcommand{\bbeta}{{\utwi{\beta}}}
\newcommand{\bgamma}{{\utwi{\gamma}}}
\newcommand{\blambda}{{\utwi{\lambda}}}
\newcommand{\sfT}{\textsf{T}}
\newcommand{\reals}{\mathbb{R}}
\newcommand{\RomanNumeralCaps}[1]
{\MakeUppercase{\romannumeral #1}}
\newcommand{\xmark}{\ding{55}}%
\begin{document}
\title{Online Proximal-ADMM For Time-varying \\ Constrained Convex Optimization}
\author{Yijian Zhang, Emiliano Dall'Anese \emph{Member, IEEE}, and Mingyi Hong \emph{Member, IEEE}
\thanks{Y. Zhang is with department of IMSE, Iowa State University, Ames, USA; email: yijian@iastate.edu. E. Dall'Anese is with department of Electrical, Computer, and Energy
Engineering, University of Colorado Boulder, Boulder, USA; email: emiliano.dallanese@colorado.edu. The work of E. Dall'Anese was supported in part by the National Science Foundation award 1941896. M. Hong is with department of ECE, University of Minnesota, Minneapolis, USA; email: mhong@umn.edu. Mingyi Hong was supported, in part, by the National Science Foundation (grant CIF-1910385, grant CNS-2003033).}
}

\maketitle

\begin{abstract}
This paper considers a convex optimization problem with cost and constraints that evolve over time. The function to be minimized  is  strongly convex and possibly non-differentiable, and  variables are coupled through linear constraints.  In this setting, the paper proposes an online algorithm based on the alternating direction method of multipliers (ADMM), to track the optimal solution trajectory of the time-varying problem; in particular, the proposed algorithm consists of a primal proximal gradient descent step and an appropriately perturbed dual ascent step. 
The paper derives tracking results, asymptotic bounds, and linear convergence results.The proposed algorithm is then specialized to a multi-area power grid optimization problem, and our numerical results verify the desired properties.

\end{abstract}


\IEEEpeerreviewmaketitle

\section{Introduction}
This paper considers  time-varying optimization problems for network systems, where objective and constraints evolve over time~\cite{popkov2005gradient,simonetto2014double,Shahrampour2018,hall2015online,bernstein2018online,simonetto2016class}. The applicability of time-varying optimization problems is evident in a number of domains including power grids~\cite{dall2018optimal,tang2017real,hauswirth2016projected}, communication systems~\cite{Low1999Flow,Chen12}, and online methods in signal processing \cite{asif2014sparse}, just to name a few; see also the representative works~\cite{rahili2017distributed,fazlyab2016self,neely2017online} and~\cite{dall2019optimization} for additional time-varying models and application examples.

In particular, assume that the temporal domain is discretized as $\{k \tau, k \in \mathbb{N}\}$, with $\tau>0$ being a given interval time~\cite{simonetto2016class,dall2019optimization}. This paper focuses on time-varying optimization problems in the following form~\cite{popkov2005gradient,simonetto2014double}:
\begin{subequations}
\label{P1}
\begin{align}
\textrm{(P1)~~} \min_{\bx\in\mathbb{R}^m,\by\in\mathbb{R}^n}~&f^{(k)}(\bx) + g^{(k)}(\by)\label{P1C}\\
\text{s.t.}~&\bA^{(k)}\bx+\bB^{(k)}\by=\bb^{(k)}\label{constraint_intro}\\
&\bx\in\mathcal{X}^{(k)}, \by\in\mathcal{Y}^{(k)},
\end{align}
\end{subequations}
where $k$ is the index for the time steps $\{k \tau, k \in \mathbb{N}\}$;  $f^{(k)}(\cdot)$ and $g^{(k)}(\cdot)$ are strongly convex functions (for all times $k \tau$); $\bA^{(k)}\in\mathbb{R}^{\ell\times m}, \bB^{(k)}\in\mathbb{R}^{\ell\times n}$ are time-varying matrices; and, $\mathcal{X}^{(k)},\mathcal{Y}^{(k)}$ are convex constraint sets. For optimization and learning problems with streams of data, the interval $\tau$ coincides with the inter-arrival date of data points~\cite{dall2019optimization}; when problem \eqref{P1} is associated with a network, the functions $f^{(k)}(\cdot)$ and $g^{(k)}(\cdot)$ can capture performance objectives that evolve over time, whereas~\eqref{constraint_intro} can capture time-varying physical or logical interactions in the network. Denoting as $\bx^{*,(k)}$ an optimal solution of~\eqref{P1} at time $k \tau$, the optimization model~\eqref{P1} leads to an optimal \emph{trajectory}. The problem addressed in this paper pertains to the development of algorithms that enable \emph{tracking of the optimal trajectory} $\{\bx^{*, (k)}\}_{k \in \mathbb{N}}$. It is worth noticing that, for  problems with inequality constraints, one can always add slack variables to re-write them as equality constraints -- thus fitting  the formulation \eqref{P1}. 


Previous efforts\cite{kar2011gossip,olfati2007consensus} have addressed dynamic (consensus-type) problems as a series of static problems, and have assumed a time-scale separation between algorithms and variability of the problem,  so that convergence is reached for each time. However, this might not be  when the system parameters and the problem inputs change fast, at a time scale that is comparable with the execution of one (or a few) algorithmic steps; \cite{rahili2017distributed,fazlyab2016self, cherukuri2017saddle} successfully approach dynamic problem in continuous time, but only for isolated systems where time-varying exogenous inputs are available at a central processor.  

In recent years, an intensive research has focused on real-time implementation: \cite{rahili2017distributed} presents a control algorithm for real-time multi-agent systems with the ability to track optimal trajectory, however, only the cost function is time-varying; \cite{tang2017real} proposes an online algorithm for optimal power flow problem based on quasi-Newton method. It can be shown that proposed algorithm is able to provide suboptimal solution at a fast timescale. The tracking ability hinges on the accurate estimation of second order information. For the same application, \cite{dall2018optimal} leverages dual subgradient method and system feedback (or measurements) to design a tracking algorithm based on a double smoothing strategy. Regularization terms are added in both primal and dual subproblems to prove Q-linear convergence to a neighborhood of optimal solution for each time instance; \cite{bernstein2018online} further extend double smoothing algorithm to more general settings and provide a regret analysis. The authors in~\cite{onlineSaddle} presents a saddle-point method for networked online convex optimization, and provide a regret analysis for problems with a Lipshitz-continuous function. Lastly,~\cite{colombino2019online,zheng2019implicit} considered online optimization methods for output regulation problems in dynamical systems. It is also worth mentioning that prediction-correction methods have been utilized to solve time-varying convex optimization problems; see e.g.,~\cite{simonetto2016class,fazlyab2017prediction} and the recent survey paper~\cite{simonetto2020time}.

This paper proposes the development of an online algorithm for time-varying convex problems based on the  alternating direction method of multipliers (ADMM) method\cite{boyd2011distributed}. Using a quadratic regularization term, ADMM can allow one to deal with nonsmooth terms,  and it exhibits improved convergence properties relative to dual (sub)gradient methods, especially for problems with ill-conditioned dual functions \cite{deng2016global}\cite{hong2017linear}. The paper present a new algorithm that has following characteristics: i) at each step the primal subproblems are solved via proximal gradient descent -- providing favorable scalability to large-scale problems and accommodating non-smooth objectives; ii) a dual perturbation method is utilized, where the dual variables are suitably perturbed at every iteration to gain in convergence rate. Related works along this line include the following: \cite{ling2014decentralized} leverages ADMM to solve a real-time multi-agent problem. But it differs from the present work because  it  considers only consensus constraints (a special case of our general formulation); \cite{cao2019dynamic} considers a dynamic sharing problem, and convergence to a neighborhood is provided under standard assumptions; however, the constraint is also a special case of our formulation. 

We note that relative to existing online primal-dual methods~\cite{onlineSaddle,simonetto2014double,bernstein2018online}, the proposed method can handle non-differentiable costs. We also note that the choice of ADMM as opposed to, e.g. an Arrow-Hurwicz method, is due to the following two reasons: i) problem \eqref{P1} has a two-block structure (i.e., $\bx,\by$) which can be effectively handled by ADMM; and, ii) the objective function has nonsmooth terms, which would nevertheless require modifications of the Arrow-Hurwicz method.


To summarize, this paper has the following main contributions: 
\begin{itemize}
    \item We develop an online proximal-ADMM algorithm for solving time-varying optimization problems of  form~\eqref{P1}; 
    \item We provide  convergence analysis, which shows that the proposed algorithm can track the optimal solution trajectory of~\eqref{P1} under mild assumptions; in particular, our methodology does not require smoothness of the cost and does not rely on the full-rankness of the constraint matrix  (see Table \ref{tab:conditions} for detailed comparison between the convergence conditions of a few algorithms).
\end{itemize}
\begin{table}
\begin{center}
\begin{tabular}{ |p{1.0cm}||p{1.3cm}|p{1.5cm}|p{1.7cm}| p{1.1cm}| }
 \hline
 & {\bf Strong Convexity} & {\bf Lipschitz Continuity} & {\bf Full Row Rank} & {\bf Optimality} \\
 \hline 
 \hline
 \multirow{4}{4em}{\bf Classic ADMM} & $f^{(k)}$ & $\nabla f^{(k)}$&$\bA^{(k)},(\bB^{(k)})^T$ & \multirow{4}{4em}{Optimal solution} \\
 \cline{2-4}&$f^{(k)}, g^{(k)}$ & $ \nabla f^{(k)} $ & $\bA^{(k)}$ &\\
 \cline{2-4}&$f^{(k)}$ & $\nabla f^{(k)}, \nabla g^{(k)}$&$(\bB^{(k)})^T$ &  \\
 \cline{2-4}&$f^{(k)}, g^{(k)}$ & $\nabla f^{(k)}, \nabla g^{(k)}$&\xmark &\\
 \hline
 {\bf Proposed Algorithm} &   $f^{(k)}, g^{(k)}$  & \xmark   & \xmark & Perturbed solution [cf.~\eqref{AKKT}] \\
 \hline
\end{tabular}
\end{center}
\caption{Trade off between optimality and conditions for linear convergence \cite{deng2016global}. }\label{tab:conditions}
\end{table}

Our previous work \cite{zhang2017dynamic} focuses on ADMM-based online algorithms  to track a solution of a domain-specific linearized AC optimal power flow (OPF) problem in power grids. In this application domain, this work significantly extends \cite{zhang2017dynamic} in the following ways: 

\begin{itemize}
    \item We consider more general linearized OPF formulations, which can be used to  
    deal with, for example, OPF problems in a distributed setting, where the power system is divided into areas~\cite{autonomous_grids}; and, 
    \item A different algorithm which works under  milder conditions and has wider applicability is proposed. 
\end{itemize}

The remainder of paper is organized as follows. Section \RomanNumeralCaps 2 will give the general time-varying problem formulation. Section \RomanNumeralCaps 3 will introduce our online algorithm. Section \RomanNumeralCaps 4 will apply proposed algorithm to two applications, one is in power systems, the other one is route selection. Tracking ability is shown in \RomanNumeralCaps 5 and \RomanNumeralCaps 6 in the form of convergence analysis and simulation, respectively.

\section{Problem Formulation}

Consider the time-varying  problem \eqref{P1}\footnote{Throughout this paper, boldface characters denote vectors or matrices; characters with superscript ${(k)}$ denote time varying iterates and parameters; for a given vector $\bx$ and matrix $\bG$, $\|\bx\|^2_{\bG} := \bx^T\bG\bx$; $<\bx, \by>$ denotes the inner product between the vectors $\bx$ and $\by$. Given a non-differentiable function $h$, the proximal operator is defined as
$\text{prox}_{h}(\bx) = \arg\min_\bz\|\bz-\bx\|^2 + h(\bz).$}.  At time $k$, if problem \eqref{P1} is solved to global optimality, then we say that the {\it perfect tracking} is achieved. However, in many applications~\cite{dall2019optimization} such perfect tracking may not be possible because before the problem at time $k$ is solved, it may have already evolved to a new problem. Specifically, iterative algorithms often involve multiple iterations of computing and communication, and by the time algorithms converge for time $k$, problem parameters such as  $\bA^{(k)},\bB^{(k)}, \bb^{(k)}$ might have already changed. Therefore it is desirable to design algorithms with certain ``tracking ability", which means that the iterates can be continuously steered to stay close to the time-varying optimal solutions.

Let us reformulate problem \eqref{P1} as follows. First, we rewrite the time-varying constraint sets $\mathcal{X}^{(k)},\mathcal{Y}^{(k)}$ into indicator functions in the objective; and then we separate objective into non-differential functions $f_0^{(k)}(\bx), g_0^{(k)}(\by)$ and differential functions $f_1^{(k)}(\bx), g_1^{(k)}(\by)$. At time $k$ we consider the following  time-varying problem:
\begin{subequations}
\label{P2}
\begin{align} 
\textrm{(P2)~~}\min_{\bx\in\mathbb{R}^m,\by\in\mathbb{R}^n}~&f^{(k)}(\bx) + g^{(k)}(\by) \\
\text{s.t.}~&\bA^{(k)}\bx+\bB^{(k)}\by=\bb^{(k)}\label{P2C}
\end{align}
\end{subequations}
where
\begin{align}
    & f^{(k)}(\bx) \hspace{-1mm}:=\hspace{-1mm} f_0^{(k)}(\bx)+f_1^{(k)}(\bx), g^{(k)}(\by)\hspace{-1mm} := \hspace{-1mm} g_0^{(k)}(\by)+g_1^{(k)}(\by),\nonumber\\
    & \bA^{(k)} := \bA(t_k),\bB^{(k)} := \bB(t_k),\bb^{(k)} := \bb(t_k),\nonumber\\ 
    & \mathcal{X}^{(k)}:=\mathcal{X}(t_k),\mathcal{Y}^{(k)}:=\mathcal{Y}(t_k).\nonumber
\end{align}
Throughout the paper we will assume that the following assumption holds.
\begin{assumption}
\label{Assum_uniform}
For each time $k$, $f^{(k)}, g^{(k)}$ satisfy
\vspace{-1mm}
\begin{align}
\hspace{-4mm}\langle\partial f^{(k)}(\bx_1)\hspace{-1mm}-\hspace{-1mm}\partial f^{(k)}(\bx_2), \bx_1\hspace{-1mm}-\hspace{-1mm}\bx_2\rangle&\hspace{-1mm}\geq\tilde{v}_f\|\bx_1\hspace{-1mm}-\hspace{-1mm}\bx_2\|, \forall \bx_1,\bx_2\label{eqn:convexity_f}\\
\hspace{-3mm}\langle\partial g^{(k)}(\by_1)\hspace{-1mm}-\hspace{-1mm}\partial g^{(k)}(\by_2), \by_1\hspace{-1mm}-\hspace{-1mm}\by_2\rangle&\geq\tilde{v}_g\|\by_1\hspace{-1mm}-\hspace{-1mm}\by_2\|,\forall \by_1,\by_2\label{eqn:convexity_g}
\end{align}
where $\tilde{v}_f,\tilde{v}_g$ are uniform lower bounds of strongly convex constants for $f^{(k)}, g^{(k)}$. Functions $f_1^{(k)}, g_1^{(k)}$ have Lipschitz-continuous gradients, i.e.,
\vspace{-1mm}
\begin{align}
    \|\nabla f_1^{(k)}(\bx_1)\hspace{-1mm} -\hspace{-1mm} \nabla f_1^{(k)}(\bx_2)\|&\leq \tilde{L}_f\|\bx_1\hspace{-1mm}-\hspace{-1mm}\bx_2\|,\forall \bx_1,\bx_2 \label{eqn:Lp_f}\\
    \|\nabla g_1^{(k)}(\by_1) \hspace{-1mm}- \hspace{-1mm}\nabla g_1^{(k)}(\by_2)\|&\leq\tilde{L}_g\|\by_1\hspace{-1mm}-\hspace{-1mm}\by_2\|,\forall\by_1,\by_2\label{eqn:Lp_g}
\end{align}
where $\tilde{L}_f,\tilde{L}_g$ are uniform upper bounds of Lipschitz constants for $\nabla f_1^{(k)}, \nabla g_1^{(k)}$.
\end{assumption}
\begin{assumption}\label{Assum_slater}
For each time $k$, the functions $f^{(k)}(\bx), g^{(k)}(\by)$ are coercive;  i.e.,
\begin{align*}
    f^{(k)}(\bx)\rightarrow\infty~\text{as}~\|\bx\|\rightarrow\infty,~g^{(k)}(\by)\rightarrow\infty~\text{as}~\|\by\|\rightarrow\infty \, .
\end{align*}
\end{assumption}
Assumption \ref{Assum_slater} will be instrumental to ensure that the iterates are bounded. Since continuous coercive functions' level sets $\{\bx| f(\bx) \leq \mu_1, \forall \mu_1\}, \{\by| g(\by) \leq \mu_2, \forall \mu_2\} $ are always compact, the optimal solutions to problem \eqref{P2}, defined as $\bx^{\text{opt},(k)}, \by^{\text{opt},(k)}$, are bounded, i.e.,
\begin{align*}
    &\|\bx^{\text{opt},(k)}\|\leq \sigma_1, ~\|\by^{\text{opt},(k)}\|\leq \sigma_2
\end{align*}
for some positive constants  $\sigma_1, \sigma_2$.

\section{Online Proximal-ADMM using Perturbations}

This section presents an ADMM-based algorithm to track an optimal solution trajectory of the time-varying problem \eqref{P1}. As summarized in Table I, the proposed algorithm exhibits linear convergence guarantees under less stringent  conditions relative to existing ADMM-based methods (even for static problems). In fact, although classic ADMM is conceptually simple and easy to implement, the conditions under which it is convergent is shown to be quite restrictive \cite{deng2016global}. We propose a new algorithm by leveraging the idea of dual perturbation \cite{hajinezhad2019perturbed,koshal2011multiuser} and gradient steps; this will provide a way to demonstrate convergence for a larger family of problems. However, a linear convergence rate at milder conditions comes at the cost of ensuring tracking of an approximate Karush-Kuhn-Tucker (KKT) point~\cite{hajinezhad2019perturbed,koshal2011multiuser,bernstein2018online}. 

Accordingly, we propose to add a small perturbation to the dual variable $\bflambda$ in the form of $1-\beta\gamma$, where $\gamma>0$ is the perturbation parameter and $\beta \gamma \in(0,1)$. The perturbed augmented Lagrangian function is then defined as 
\begin{align}
&\mathcal{L}^{(k)}(\bx,\by;\bflambda) =\mathcal{L}_1^{(k)}(\bx,\by;\bflambda)+  f_0^{(k)}(\bx)+g_0^{(k)}(\by) \label{AL2}
\end{align}
where
\begin{align}
&\mathcal{L}_1^{(k)}(\bx,\by;\bflambda)\hspace{-1mm} = \hspace{-1mm}f^{(k)}_1(\bx)\hspace{-1mm}+\hspace{-1mm}g^{(k)}_1(\by) +\frac{\beta}{2}\|\bA^{(k)}\bx\hspace{-1mm}+\hspace{-1mm}\bB^{(k)}\by\hspace{-1mm}-\hspace{-1mm}\bb^{(k)}\|^2\nonumber\\
&- (1-\beta\gamma)\bflambda^T\hspace{-1mm}(\bA^{(k)}\bx+\bB^{(k)}\by\hspace{-1mm}-\hspace{-1mm}\bb^{(k)}) \nonumber.
\end{align}
Mirroring~\cite{chambolle2011first},  to update $\bx$ and $\by$, one can performs the following steps in an online fashion (where $k$ is here the time index):
\begin{align*}
\by^{(k+1)} \hspace{-1mm}&=\hspace{-1mm}\arg\min_{\by}\;\left\langle\frac{\partial\mathcal{L}_1^{(k+1)}(\bx^{(k)},\by^{(k)};\bflambda^{(k)})}{\partial\by^{(k)}},\by-\by^{(k)}\right\rangle\\
&+ g_0^{(k+1)}(\by)  + \frac{1}{2\alpha_2}\|\by-\by^{(k)}\|^2,\\
\bx^{(k+1)} \hspace{-1mm}&=\hspace{-1mm}\arg\min_{\bx}\; \left\langle\frac{\partial\mathcal{L}_1^{(k+1)}(\bx^{(k)},\by^{(k+1)};\bflambda^{(k)})}{\partial\bx^{(k)}},\bx-\bx^{(k)}\right\rangle\\
&+ f_0^{(k+1)} (\bx) + \frac{1}{2\alpha_1}\|\bx-\bx^{(k)}\|^2,
\end{align*}
where $\alpha_1,\alpha_2$ are step sizes. We are now ready to outline the \emph{online  proximal-ADMM algorithm} with perturbations, whose steps are the following (we stress again that $k$ is  the time index):
\begin{subequations}
\label{perturb}
\begin{align}
\hspace{-1cm}&\hspace{-2mm}\by^{(k+1)} \hspace{-1mm}= \hspace{-1mm}\underset{g_0^{(k+1)}}{\textrm{prox}}\hspace{-1mm}\left(\by^{(k)} \hspace{-1mm}-\hspace{-1mm} \alpha_2\frac{\partial\mathcal{L}_1^{(k+1)}(\bx^{(k)},\by^{(k)};\bflambda^{(k)})}{\partial \by^{(k)}}\right),\label{itr:y}\\
&\hspace{-2mm}\bx^{(k+1)} \hspace{-2mm}= \hspace{-1mm}\underset{f_0^{(k+1)}}{\textrm{prox}}\hspace{-1mm}\left(\bx^{(k)} \hspace{-2mm}- \hspace{-1mm}\alpha_1\frac{\partial\mathcal{L}_1^{(k+1)}(\bx^{(k)},\by^{(k+1)};\bflambda^{(k)})}{\partial \bx^{(k)}}\hspace{-1mm}\right),\label{itr:x}\\
&\hspace{-2mm}\bflambda\hspace{-0.5mm}^{(k+1)} \hspace{-2mm}= \hspace{-1mm}(1\hspace{-1mm}-\hspace{-1mm}\beta\gamma)\bflambda\hspace{-0.5mm}^{(k)}\hspace{-1mm} - \hspace{-1mm}\beta\hspace{-1mm}\left(\bA\hspace{-0.5mm}^{(k+1)}\bx\hspace{-0.5mm}^{(k+1)}\hspace{-1.5mm}+\hspace{-1mm}\bB\hspace{-0.5mm}^{(k+1)}\by\hspace{-0.5mm}^{(k+1)}\hspace{-1.5mm}-\hspace{-1mm}\bb\hspace{-0.5mm}^{(k+1)}\hspace{-1mm}\right)\label{perturb_dual}.
\end{align}
\end{subequations}
The underlying assumption here is that for each time $k$, the interval $\tau$ is sufficient  to run at least one iteration of \eqref{perturb}. 

Compared to classical ADMM-based algorithms (for both static and time-varying optimization),  key differences here are in the proximal gradient steps in the primal update and the perturbation added to $\blambda$. The proximal gradient steps may provide favorable computational gains when applied to a large-scale problem; it also facilitate ones to develop measurement-based algorithms as discussed in, e.g.,~\cite{dall2019optimization}. The perturbation added to $\blambda$ emerges when considering a regularized Lagrangian function of the form $\mathcal{L}^{(k)}(\bx,\by;\blambda) - \frac{\gamma}{2}\|\blambda\|^2$ as in e.g.,~\cite{bernstein2018online,koshal2011multiuser,hajinezhad2019perturbed}. This additional term renders the regularized Lagrangian strongly concave in $\blambda$. Adding a (small) perturbation in dual variable (or, equivalently, considering a regularized Lagrangian) is a very useful technique to ensure convergence of the ADMM, and even obtain a linear convergence behavior as explained shortly. To gain   intuition,  let us consider a toy example as follows:
\begin{align}
\min_{\bx} \quad 0, \quad \mbox{s.t.} \quad \bA \bx =0
\end{align}
where $\bA$ is some fixed matrix, not necessarily positive semidefinite.  
The optimality condition for the above problem can be written down as the following saddle point problem
\begin{align}
\min_{\bx}\max_{\bflambda}~ \bx^T\bA\bflambda\label{perturb_prob}.
\end{align}
One can apply the alternating gradient descent/ascent method for solving problem \eqref{perturb_prob}, whose steps are similar as \eqref{perturb} and are given below 
\begin{align*}
\bx^{(k+1)} &= \bx^{(k)} - \alpha(\bA\bflambda^{(k)}),\\
\bflambda^{(k+1)} &= \bflambda^{(k)} + \beta(\bA^T\bx^{(k)}).
\end{align*}
In Figure \ref{fig:perturb}, we plot $\bx^T\bA\blambda$ using a random matrix $\bA$. An interesting observation  is that the algorithm will diverge if no perturbation is added to $\by$ as shown in Figure \ref{c1}; also see \cite{daskalakis2018limit} for a formal proof. However, once a small perturbation is added to $\by$ in both primal and dual updates, i.e.
\begin{align*}
\bx^{(k+1)} &= \bx^{(k)} - \alpha(\bA\blambda^{(k)}(1-\gamma\beta)),\\
\bflambda^{(k+1)} &= \bflambda^k(1-\gamma\beta) + \beta(\bA^T\bx^{(k)}),
\end{align*}
where $\gamma>0$ is a small number, the algorithm will converge as shown in Figure \ref{c2}. This example serves as a motivation to use the perturbation technique.

\begin{figure}
        \centering
        \begin{subfigure}[b]{0.40\textwidth}
            \centering
            \includegraphics[width=\textwidth]{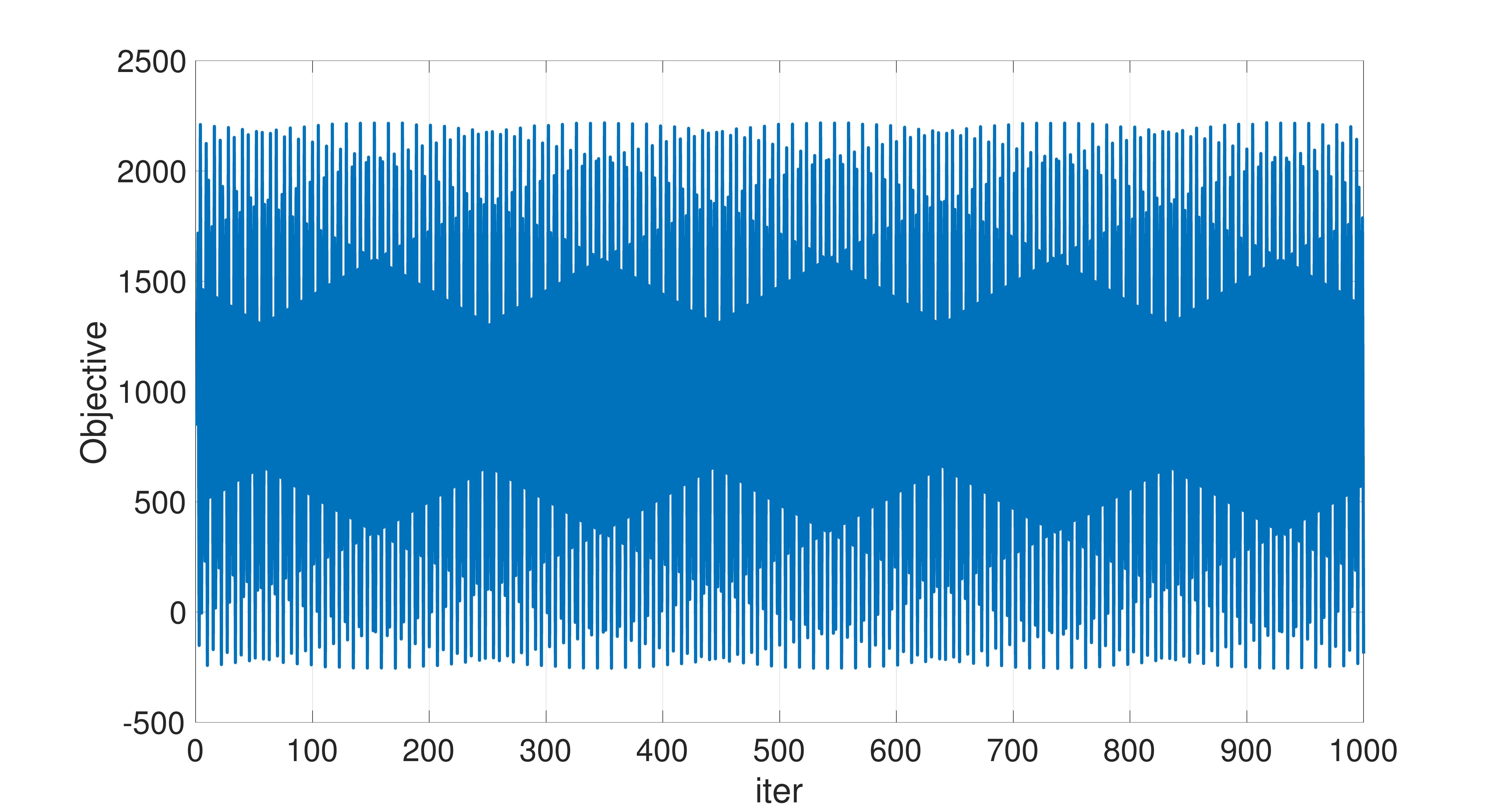}
            \caption[]%
            {{\small Performance without perturbation}}
            \label{c1}
        \end{subfigure}
        \hfill
        \begin{subfigure}[b]{0.40\textwidth}  
            \centering 
            \includegraphics[width=\textwidth]{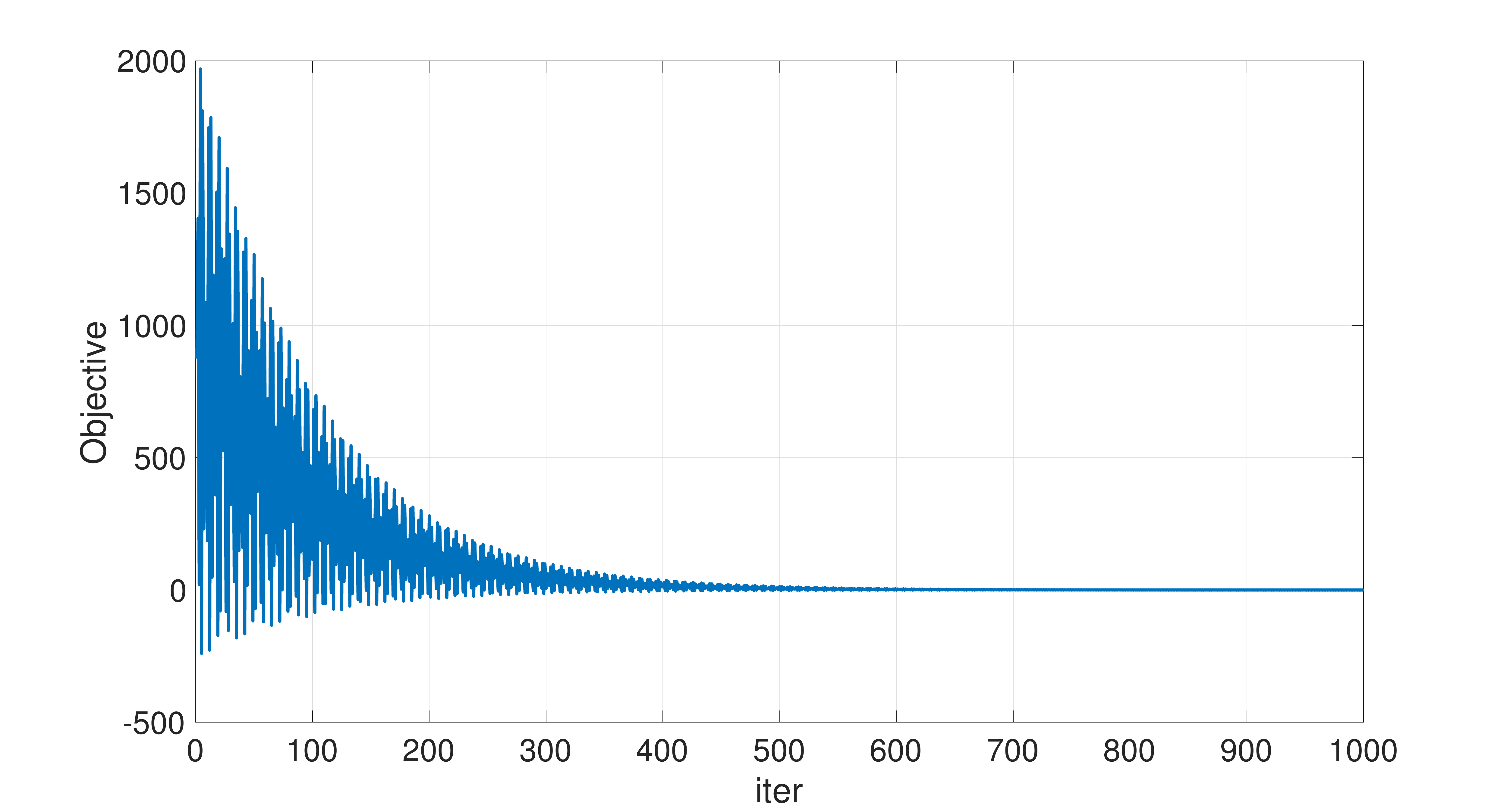}
            \caption[]%
            {{\small Performance with perturbation.}}    
            \label{c2}
        \end{subfigure}
        {\small } 
        \caption{Example of trends of the objective value of \eqref{perturb_prob} for methods with and without perturbation.}
        \label{fig:perturb}
\end{figure}

\section{Convergence Analysis}

In this section we provide analytical results for convergence and tracking ability of the proposed algorithm.

From \cite{deng2016global}, it is known that existing ADMM has relatively strict conditions for linear convergence and these conditions may not hold true in some applications; for example, the coefficient matrices in constraints 
\eqref{P2C} might not have full row rank (this is the case for the application presented later in the paper). Further, in some applications, the objective function of \eqref{P2} may also contain non-smooth terms, which can jeopardize the Lipschitz continuity property.
In contrast, the proposed algorithm could be utilized in a wider range of time-varying optimization problems. It is also worth pointing out that \cite{deng2016global} deals with static optimization problems; here, the focus is on time-varying settings. We begin by first making the following assumption.
\begin{assumption}
\label{Assum1}
For each time $k$, there exists a saddle point $\bw^{\text{opt},(k)} = (\bx^{\text{opt,(k)}},\by^{\text{opt,(k)}},\blambda^{\text{opt,(k)}})$ to problem \eqref{P2} that satisfies the KKT condition:
\begin{align*}
    (\bB^{(k)})^T\blambda^{\text{opt},(k)} &\in \partial g^{(k)}(\by^{\text{opt},(k)}),\\
    (\bA^{(k)})^T\blambda^{\text{opt},(k)} &\in \partial f^{(k)}(\bx^{\text{opt},(k)}),\\
    \bA^{(k)}\bx^{\text{opt},(k)}+&\bB^{(k)}\by^{\text{opt},(k)} = \bb^{(k)},
\end{align*}
where $\blambda^{\text{opt,(k)}}$ is dual variable associated with \eqref{P2C}. The optimal dual variable has a uniform bound, i.e. $\|\blambda^{\text{opt},(k)}\|\leq \mathcal{M}$, where $\mathcal{M}$ is a constant.
\end{assumption}
Assumption \ref{Assum1} is a standard assumption to for convergence analysis \cite{deng2016global}. If Assumption \ref{Assum1} does not hold at a time  $k$, the problem formulation would not be well posed  since there is no solution trajectory to track.  \\
Next,  we analyze the convergence of the algorithm. To proceed, we concatenate primal and dual optimizer as $\{\bw^*\} = \{\bx^*;\by^*;\bflambda^*\}$ (for static case) as the optimizer of $\max\limits_{\blambda}\min\limits_{\bx,\by}\mathcal{L}^{(k)}$ at time $k$. For notation simplicity we neglect superscript $k$ for static case and we have:
\begin{subequations}\label{AKKT}
\begin{align}
\bA^T\bflambda^*-\nabla f_1(\bx^*)\in\partial f_0(\bx^*)\label{x_opt}\\
\bB^T\bflambda^*- \nabla g_1(\by^*) \in \partial g_0(\by^*) \label{y_opt}\\
\bA\bx^*+\bB\by^*-\bb+\gamma\bflambda^* = 0\label{dual_opt}.
\end{align}
\end{subequations}
Condition \eqref{AKKT} is a perturbed version of KKT conditions, related to approximate KKT (AKKT)\cite{dutta2013approximate,andreani2011sequential}. Basically, optimizer $\bw^*$ is not necessarily the KKT point of original problem \eqref{P1}, but rather an approximate solution. 
Let $\{\bx^{\text{opt},(k)}, \by^{\text{opt},(k)}, \blambda^{\text{opt},(k)}\}$ be a KKT point of problem \eqref{P1}, which is also the solution to \eqref{AKKT} when $\gamma = 0$. From Assumption \ref{Assum1}, we know that any optimal dual solution $\blambda^{\text{opt},(k)}$ is bounded. Let $\bv^{\text{opt},(k)} = \{\bx^{\text{opt},(k)}, \by^{\text{opt},(k)}\}, \bv^{*,(k)} = \{\bx^{*,(k)}, \by^{*,(k)}\}$; we can then directly use the result in \cite[Proposition 3.1]{koshal2011multiuser} to show that the distance between $\bv^{*,(k)}$ and $\bv^{\text{opt},(k)}$ is bounded. that is,\footnote{The original result is for smooth strongly convex function, but it is easy to check that it still holds true to our problem.}, 
\begin{align}\label{nedic}
    \kappa\|\bv^{*,(k)} - \bv^{\text{opt},(k)}\|^2 + \frac{\gamma}{2}\|\blambda^{*,(k)}\|^2 \leq \frac{\gamma}{2}\|\blambda^{\text{opt},(k)}\|^2,
\end{align}
where $\kappa>0$ is a constant. We can further derive that 
\begin{align}\label{nedic2}
    \|\bv^{*,(k)} - \bv^{\text{opt},(k)} \|\leq \sqrt{\gamma}\max\|\blambda^{\text{opt},(k)}\|\cdot\textrm{c}, \; \forall~k
\end{align}
where $c>0$ is some constant; see \cite[Proposition 3.1]{koshal2011multiuser} for a detailed discussion and the  proof of the result. Since every optimal dual variable $\blambda^{\text{opt},(k)}$ to the original problem \eqref{P1} is bounded, i.e. $\|\blambda^{\text{opt},(k)}\|\leq \mathcal{M}$, it follows that the distance between the AKKT point and a KKT point is bounded too, and such a bound depends on the choice of $\gamma$ (the smaller the $\gamma$, the smaller the distance since $\|\blambda^{\text{opt},(k)}\|$ is independent of $\gamma$). 
In the following analysis, we focus on bounding the distance between iterates generated by \eqref{perturb} and AKKT points.
To proceed, we first state the assumptions on $\bx^{*,(k)}, \by^{*,(k)}$ and on problem parameters as follows.
\begin{assumption}
\label{Assum_Bounded}
The successive difference between AKKT points is bounded:
\begin{align}
\|\bx^{*,(k+1)} - &\bx^{*,(k)}\|\leq \sigma_{\bx}, \,\,\,\, \|\by^{*,(k+1)} - \by^{*,(k)}\|\leq \sigma_{\by},\label{diff}
\end{align}  
where $\bx^{*,(k)},\by^{*,(k)}$ are primal optimizer of \eqref{AKKT} at time $k$; $\sigma_\bx>0, \sigma_\by >0$ are some constants. Also, the variation of the problem parameters is bounded as:
\begin{align}
    &\|\bA^{(k+1)} - \bA^{(k)}\| \leq \sigma_\bA, ~\|\bB^{(k+1)} - \bB^{(k)}\| \leq \sigma_\bB \label{diff_ab}\\
    &\|\bA^{(k)}\|\leq\tilde{\sigma}_{\bA},~\|\bB^{(k)}\|\leq\tilde{\sigma}_{\bB},~\|\bb^{(k+1)}-\bb^{(k)}\|\leq\sigma_\bb\label{parameter}
\end{align}
where $\sigma_\bA,\sigma_\bB,\tilde{\sigma}_{\bA},\tilde{\sigma}_{\bB},\sigma_\bb$ are some given positive constants.
\end{assumption}
Assumption \ref{Assum_Bounded} is  common in time-varying optimization~\cite{dall2019optimization, simonetto2014double, fazlyab2016self,bernstein2018online,Shahrampour2018,hall2015online,madden2020bounds}; worst-case bounds for \eqref{diff} can be obtained assuming that  the  sets $\mathcal{X}^{(k)},\mathcal{Y}^{(k)}$ are compact uniformly in time. Another approach is to measure the  distance based on the \textit{optimal drift}, without assuming a specific bound; see, e.g.,~\cite{ling2014decentralized,cao2019dynamic}. The parameters  $\sigma_{\bx}$ and $\sigma_{\by}$ quantify the maximum variation of the optimal solutions over two consecutive time steps. Since the paper deals with a tracking problem, conventional wisdom would suggest that better tracking performance can be achieved when  \eqref{P1} is not changing rapidly; this will be confirmed in the convergence results presented later (see also ~\cite{dall2019optimization}).

The main result of the paper is stated next. For notation simplicity, let $\{\bw^{(k)}\}=\{\bx^{(k)};\by^{(k)};\blambda^{(k)}\}$ be the iterates generated by~\eqref{perturb}, and  let  $\{\bw^{*,(k)}\}=\{\bx^{*,(k)},\by^{*,(k)},\blambda^{*,(k)}\}$ be an optimizer of $\max\limits_{\blambda}\min\limits_{\bx,\by}\mathcal{L}^{(k)}$.

\begin{theorem}\label{main}
Suppose that Assumptions~\ref{Assum1}--\ref{Assum_Bounded} hold for each time $k$.  Let $\bG =
\textrm{diag}(\frac{1}{\alpha_1}\bI, \frac{1}{\alpha_2}\bI, \frac{1}{\beta}\bI)$ be a  positive definite matrix. Assume that the  step size $\beta$ and the perturbation constant $\gamma$ satisfy: $\beta\gamma + \beta \leq 1, \beta \leq 1$. Finally, assume that the step sizes satisfy the  following:
\begin{align*}
    0 & < \alpha_1 \leq \left((1+\beta\gamma)\tilde{\sigma}^2_\bA + \frac{\tilde{L}_f^2}{\tilde{v}_f} \right)^{-1}  \nonumber \\
    0 & < \alpha_2 \leq \left(\frac{2\beta^2\max\tilde{\sigma}^4_\bB}{\tilde{v}_g}+ \frac{\tilde{L}_g^2}{\tilde{v}_g} \right)^{-1} \, .
\end{align*}

Then, at every time $k$, the tracking error of the algorithm~\eqref{perturb} evolves as:  
\begin{align}
	\|\bw^{(k)} \hspace{-1mm}-\hspace{-1mm} \bw^{*,(k)}\|_\bG & \leq 
	 \frac{1}{1+\delta} \|\bw^{(k-1)}-\bw^{*,(k-1)}\|_\bG \nonumber \\
	 & \,\, + \frac{1}{1+\delta} \left(\frac{\sigma_\bx^2}{\alpha_1} +\frac{\sigma_\by^2}{\alpha_2} + 2\sigma_{\blambda}^2\right)^{\frac{1}{2}}
	 \label{eq:linear_conv}
	\end{align}
where $ \sigma_{\blambda} :=\tilde{\sigma}_{\bA}\sigma_\bx+\tilde{\sigma}_{\bB}\sigma_\by+\sigma_\bb+\sigma_\bA\mathcal{J}(\sigma_1) + \sigma_\bB\mathcal{J}(\sigma_2)$, $\mathcal{J}(\sigma_1) = \sigma_1 + \sqrt{\gamma}\mathcal{M}c, \mathcal{J}(\sigma_2) = \sigma_2 + \sqrt{\gamma}\mathcal{M}c$ and  $\delta$ satisfies the condition:  
\begin{align}
0 < \delta \leq \min\left(\frac{\tilde{v}_f}{(1+\beta\gamma)\tilde{\sigma}^2_\bA + \frac{\tilde{L}_f^2}{\tilde{v}_f}},\frac{\tilde{v}^2_g}{4\beta^2\tilde{\sigma}^4_\bB + 2\tilde{L}_g^2}, \beta\gamma\right).
\end{align}
\end{theorem}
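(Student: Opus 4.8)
The plan is to bound the one-step tracking error by combining a \emph{static contraction} toward the current optimizer with a \emph{drift} term that captures how the optimizer itself moves between times $k-1$ and $k$. The key intermediate claim is a per-time-slot linear-convergence lemma for the frozen time-$k$ problem: applying a single step of~\eqref{perturb} (with its time index shifted down by one, so the update maps $\bw^{(k-1)}$ to $\bw^{(k)}$) yields
\[
\|\bw^{(k)}-\bw^{*,(k)}\|_\bG \;\leq\; \frac{1}{1+\delta}\,\|\bw^{(k-1)}-\bw^{*,(k)}\|_\bG,
\]
where $\bw^{*,(k)}$ solves the perturbed system~\eqref{AKKT}. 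Granting this, the theorem follows by inserting the triangle inequality $\|\bw^{(k-1)}-\bw^{*,(k)}\|_\bG \leq \|\bw^{(k-1)}-\bw^{*,(k-1)}\|_\bG + \|\bw^{*,(k-1)}-\bw^{*,(k)}\|_\bG$ and bounding the optimizer drift $\|\bw^{*,(k-1)}-\bw^{*,(k)}\|_\bG$ by $\big(\sigma_\bx^2/\alpha_1+\sigma_\by^2/\alpha_2+2\sigma_\blambda^2\big)^{1/2}$, which reproduces~\eqref{eq:linear_conv}.

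I expect the static contraction to be the main obstacle, and I would establish it by a $\bG$-weighted Lyapunov argument with the static optimizer written $\bw^*=(\bx^*,\by^*,\blambda^*)$. First write the first-order optimality conditions of the proximal steps~\eqref{itr:y}--\eqref{itr:x}, e.g. $\tfrac{1}{\alpha_2}(\by^{(k-1)}-\by^{(k)})-\nabla_{\by}\mathcal{L}_1 \in \partial g_0(\by^{(k)})$ and its $\bx$-analogue, and subtract the stationarity relations~\eqref{x_opt}--\eqref{y_opt}. Taking inner products with $\by^{(k)}-\by^*$ and $\bx^{(k)}-\bx^*$ and invoking the strong-convexity inequalities~\eqref{eqn:convexity_f}--\eqref{eqn:convexity_g} converts the monotone subgradient terms into $\tilde{v}_f\|\bx^{(k)}-\bx^*\|^2$ and $\tilde{v}_g\|\by^{(k)}-\by^*\|^2$, while the Lipschitz bounds~\eqref{eqn:Lp_f}--\eqref{eqn:Lp_g} and Young's inequality dominate the gradient-mismatch cross terms (yielding the $\tilde{L}_f^2/\tilde{v}_f$ and $\tilde{L}_g^2/\tilde{v}_g$ contributions); splitting the quadratic penalty $\beta\bB^T\bB$ with Young is what generates the $\beta^2\tilde{\sigma}_\bB^4$ term in the $\alpha_2$-bound and in $\delta$. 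For the dual coordinate, subtract the fixed-point identity $\gamma\blambda^{*}=-(\bA\bx^{*}+\bB\by^{*}-\bb)$ from~\eqref{perturb_dual}; the perturbation factor $1-\beta\gamma$ renders the dual block strongly monotone and supplies the $\beta\gamma$ entry of $\delta$. Collecting the three blocks gives an inequality of the form $(1+\delta)\|\bw^{(k)}-\bw^{*}\|_\bG^2 \leq \|\bw^{(k-1)}-\bw^{*}\|_\bG^2$, where the step-size restrictions on $\alpha_1,\alpha_2$ and the conditions $\beta\gamma+\beta\leq 1$, $\beta\leq 1$ are exactly those that keep every leftover quadratic coefficient nonnegative, and the admissible $\delta$ is the minimum of the per-block rates in the statement. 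The genuinely delicate part is this bookkeeping: choosing how to split each cross term and verifying the residual coefficients remain nonnegative under the stated step sizes.

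For the drift I would bound the three blocks of $\bw^{*,(k)}-\bw^{*,(k-1)}$ separately. The primal drifts $\|\bx^{*,(k)}-\bx^{*,(k-1)}\|\leq\sigma_\bx$ and $\|\by^{*,(k)}-\by^{*,(k-1)}\|\leq\sigma_\by$ are Assumption~\ref{Assum_Bounded} verbatim and produce the $\sigma_\bx^2/\alpha_1$ and $\sigma_\by^2/\alpha_2$ terms. For the dual drift I would use the perturbed stationarity~\eqref{dual_opt}: writing the residual $r^{(k)}:=\bA^{(k)}\bx^{*,(k)}+\bB^{(k)}\by^{*,(k)}-\bb^{(k)}$ so that $\gamma\blambda^{*,(k)}=-r^{(k)}$, the dual drift is controlled by $\|r^{(k)}-r^{(k-1)}\|$. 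Adding and subtracting $\bA^{(k)}\bx^{*,(k-1)}$ and $\bB^{(k)}\by^{*,(k-1)}$ and applying the parameter-variation bounds~\eqref{diff_ab}--\eqref{parameter} together with the optimizer-norm bounds $\|\bx^{*,(k)}\|\leq\mathcal{J}(\sigma_1)$ and $\|\by^{*,(k)}\|\leq\mathcal{J}(\sigma_2)$ --- which follow from~\eqref{nedic2} and Assumption~\ref{Assum_slater} --- yields $\|r^{(k)}-r^{(k-1)}\|\leq\sigma_\blambda$, hence the claimed $\sigma_\blambda$-dependent contribution to the dual drift. Substituting the three block bounds into $\|\bw^{*,(k-1)}-\bw^{*,(k)}\|_\bG^2=\tfrac{1}{\alpha_1}\|\bx^{*,(k)}-\bx^{*,(k-1)}\|^2+\tfrac{1}{\alpha_2}\|\by^{*,(k)}-\by^{*,(k-1)}\|^2+\tfrac{1}{\beta}\|\blambda^{*,(k)}-\blambda^{*,(k-1)}\|^2$ and combining with the static contraction through the triangle inequality completes the proof of~\eqref{eq:linear_conv}.
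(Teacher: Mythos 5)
Your proposal follows essentially the same route as the paper's own proof: a static one-step contraction lemma toward the frozen AKKT point $\bw^{*,(k)}$ (proved via the proximal optimality conditions, strong convexity, Lipschitz/Young splitting of the cross terms, and the perturbed dual fixed-point identity), followed by the triangle inequality and a drift bound in which the dual drift is controlled through the perturbed stationarity condition, the parameter-variation bounds, and the $\mathcal{J}(\sigma_1),\mathcal{J}(\sigma_2)$ bounds on the optimizers. Even the small wrinkle of deriving the contraction in squared $\bG$-norms, $(1+\delta)\|\bw^{(k)}-\bw^*\|_\bG^2\leq\|\bw^{(k-1)}-\bw^*\|_\bG^2$, and then quoting it with factor $1/(1+\delta)$ rather than $1/\sqrt{1+\delta}$ on the unsquared norms is shared with the paper.
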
 

\begin{corollary}
\label{linear}
Under the assumptions of Theorem~\ref{main}, one has the following asymptotic behavior for the tracking error:   
\begin{align}
	\label{eq:result_theorem}
		\limsup_{k\rightarrow\infty} \|\bw^{(k)} - \bw^{*,(k)}\|_\bG \leq 
\frac{1}{\delta} \left(\frac{\sigma_\bx^2}{\alpha_1} +\frac{\sigma_\by^2}{\alpha_2} + 2\sigma_{\blambda}^2\right)^{\frac{1}{2}} .
\end{align}
\end{corollary}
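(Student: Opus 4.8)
The plan is to treat the per-step bound \eqref{eq:linear_conv} established in Theorem~\ref{main} as a scalar first-order linear recursion in the tracking error and then pass to the limit. Write $a_k := \|\bw^{(k)} - \bw^{*,(k)}\|_\bG$, set the contraction factor $\rho := \frac{1}{1+\delta}$, and collect the additive constant $C := \left(\frac{\sigma_\bx^2}{\alpha_1} + \frac{\sigma_\by^2}{\alpha_2} + 2\sigma_{\blambda}^2\right)^{1/2}$. With this notation \eqref{eq:linear_conv} reads $a_k \leq \rho\, a_{k-1} + \rho\, C$ for every $k$. The crucial structural point is that, because Assumptions~\ref{Assum_uniform}--\ref{Assum_Bounded} provide \emph{uniform} (time-independent) bounds, each quantity entering $C$ and $\rho$ is the same at all time steps; hence one and the same recursion governs the error for all $k$.

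First I would record the only nondegeneracy the argument requires, namely $\rho \in (0,1)$: since $\delta > 0$ by hypothesis, we have $1+\delta > 1$ and therefore $0 < \rho < 1$. Next I would unroll the recursion down to the (finite) initial error $a_0$, using the finite geometric sum, to obtain
\[
a_k \leq \rho^{k} a_0 + \rho\, C \sum_{j=0}^{k-1}\rho^{j} = \rho^{k} a_0 + \rho\, C\,\frac{1-\rho^{k}}{1-\rho},
\]
which is valid precisely because $\rho \neq 1$.

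Finally I would take $\limsup_{k\to\infty}$ on both sides. Since $\rho \in (0,1)$ we have $\rho^{k}\to 0$, so the initial-condition term $\rho^{k}a_0$ vanishes and $\frac{1-\rho^{k}}{1-\rho}\to \frac{1}{1-\rho}$. This yields $\limsup_{k\to\infty} a_k \leq \frac{\rho}{1-\rho}\,C$. Substituting $\rho = \frac{1}{1+\delta}$ simplifies the prefactor through $\frac{\rho}{1-\rho} = \frac{1/(1+\delta)}{\delta/(1+\delta)} = \frac{1}{\delta}$, which reproduces exactly the claimed bound \eqref{eq:result_theorem}.

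I expect no genuine analytic obstacle here: the entire content of the corollary is inherited from the linear-convergence recursion proved in Theorem~\ref{main}, and what remains is a routine geometric-series collapse. The single point worth stating explicitly is the time-uniformity of $\rho$ and $C$. If either the contraction factor or the additive perturbation term $\sigma_{\blambda}$ varied with $k$, the clean telescoping would break and one would instead have to bound the $\limsup$ of a sum of time-varying products; the uniform assumptions are exactly what rule this out, making the passage to the limit immediate.
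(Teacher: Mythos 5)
Your proof is correct and follows essentially the same route as the paper: unroll the per-step contraction from Theorem~\ref{main} into a geometric series, let the initial-condition term $\rho^k a_0$ vanish, and simplify $\frac{\rho}{1-\rho}=\frac{1}{\delta}$. The only cosmetic difference is that the paper keeps the individual drift terms $\|\bw^{*,(i-1)}-\bw^{*,(i)}\|_\bG$ inside the sum before bounding each by the uniform constant $\psi$, whereas you bound them uniformly up front; the substance is identical.
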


The proofs of Theorem~\ref{main} and Corollary~\ref{linear} are provided in the Appendix.\\
\textit{Remark.} We also provide a comment on the boundedness of the dual iterates; specifically, $\blambda^{(k)}$ can be bounded as follows:
\begin{align*}
    \|\blambda^{(k)}\| &= \|\blambda^{(k)} - \blambda^{*,(k)} + \blambda^{*,(k)}\| \\&\leq \|\blambda^{(k)} - \blambda^{*,(k)}\| + \|\blambda^{*,(k)}\| \, .
\end{align*}
From Corollary 1, one has that $\|\blambda^{(k)} - \blambda^{*,(k)}\|$ is bounded; further, from \cite[Proposition 3.1]{koshal2011multiuser} and \eqref{nedic} it follows that $\|\blambda^{*,(k)}\|\leq \|\blambda^{\text{opt},(k)}\|$, where $\blambda^{\text{opt},(k)}$ is the optimal dual solution (without perturbation), which is assumed to be bounded as $\|\blambda^{\text{opt},(k)}\|\leq\mathcal{M}$. Therefore, $\|\blambda^{(k)}\|$ is bounded too. The distance between $\bw^{(k)}$ and the optimal solution $\bw^{\text{opt},(k)} = \{\bx^{\text{opt},(k)},\by^{\text{opt},(k)},\blambda^{\text{opt},(k)}\}$ can be bounded as follows:
\begin{align*}
    &\|\bw^{(k)} - \bw^{\text{opt},(k)}\|_\bG = \|\bw^{(k)} - \bw^{*,(k)} \hspace{-1mm}+\hspace{-1mm} \bw^{*,(k)}\hspace{-1mm} -\hspace{-1mm} \bw^{\text{opt},(k)}\|_\bG \\
   &\leq \|\bw^{(k)} - \bw^{*,(k)}\|_\bG + \|\bw^{*,(k)} - \bw^{\text{opt},(k)}\|_\bG\\
   &\stackrel{\eqref{eq:result_theorem}}{\leq}\frac{1}{\delta} \left(\frac{\sigma_\bx^2}{\alpha_1} +\frac{\sigma_\by^2}{\alpha_2} + 2\sigma_{\blambda}^2\right)^{\frac{1}{2}} + \|\bw^{*,(k)} - \bw^{\text{opt},(k)}\|_\bG.
\end{align*}
From \eqref{nedic2} and $\|\blambda^{\text{opt},(k)}\|\leq\mathcal{M}$, we know that 
\begin{align*}
    &\|\bw^{*,(k)} - \bw^{\text{opt},(k)}\|^2_\bG\\
    =&\|\bv^{*,(k)} - \bv^{\text{opt},(k)}\|^2_{(\frac{1}{\alpha_1};\frac{1}{\alpha_2})}+\|\blambda^{*,(k)} - \blambda^{\text{opt},(k)}\|^2_{\frac{1}{\beta}}\\
    \leq&\max(\frac{1}{\alpha_1}, \frac{1}{\alpha_2})\gamma\mathcal{M}^2+\frac{2}{\beta}\mathcal{M}^2.
\end{align*}
This eventually gives us
\begin{align*}
    &\|\bw^{(k)} - \bw^{\text{opt},(k)}\|^2_\bG \\
    \leq& \frac{2}{\delta^2} \left(\frac{\sigma_\bx^2}{\alpha_1} +\frac{\sigma_\by^2}{\alpha_2} + 2\sigma_{\blambda}^2\right) + 2 {\max(\frac{1}{\alpha_1}, \frac{1}{\alpha_2})\gamma\mathcal{M}^2+\frac{4}{\beta}\mathcal{M}^2} \, .
\end{align*}
The result~\eqref{eq:linear_conv} asserts that the proposed algorithm exhibits linear convergence with a contraction coefficient of $1/(1-\delta)$~\cite{dall2019optimization}; the evolution of the tracking error depends on the temporal variability of the optimal trajectory, which is bounded by the second term on the right-hand-side of~\eqref{eq:linear_conv}. It is worth pointing out that, if the problem~\eqref{P1} is static, then~\eqref{eq:linear_conv} boils down to   
$$
\|\bw^{(k)} \hspace{-1mm}-\hspace{-1mm} \bw^{*}\|_\bG \leq 
	 \frac{1}{1+\delta}\|\bw^{(k-1)}-\bw^{*}\|_\bG
$$
showing linear convergence of the proximal-ADMM method in batch optimization ($\bw^{*}$ is in this case the solution of the static problem).

The asymptotic result~\eqref{eq:result_theorem} matches existing results in online methods for time-varying optimization~\cite{bernstein2018online,dall2019optimization,dixit2019online}. In particular, the bound depends on the $\delta$ (which affects the contraction coefficient) and the maximum variation of the optimal trajectory over two consecutive time steps, and it shows  how the variation of the problem parameters and optimal solutions can affect the  tracking performance.

Although~\eqref{eq:result_theorem} asserts that the maximum tracking error is bounded, its tightness is to be investigated on a case-by-case basis (i.e., based on the particular evolution of the solution)\footnote{For example, ~\cite{madden2020bounds} showed that a bound of the form~\eqref{eq:result_theorem} is actually tightly met for online gradient and proximal-gradient descent  for a particular sequence of adversarial cost functions.}.

The following corollary presented  shows how to maximize $\delta$ (and, hence, how to minimize the worst-case tracking bound). 

\begin{corollary}
\label{stepsize}
If the step sizes are selected as 
\begin{align*}
    \alpha_1 =\hspace{-1mm} \left(\hspace{-1mm}(1+\beta\gamma)\tilde{\sigma}^2_\bA + \frac{\tilde{L}_f^2}{\tilde{v}_f} \right)^{-1}\hspace{-2mm}, 
    \alpha_2 =\hspace{-1mm} \left(\frac{2\beta^2\max\tilde{\sigma}^4_\bB}{\tilde{v}_g}+ \frac{\tilde{L}_g^2}{\tilde{v}_g} \right)^{-1} 
\end{align*}
then one has that the contraction coefficient $\frac{1}{1+ \delta}$ can be computed using the following expression:
\begin{align}
\delta = \min\left(\frac{\tilde{v}_f}{(1+\beta\gamma)\tilde{\sigma}^2_\bA + \frac{\tilde{L}_f^2}{\tilde{v}_f}},\frac{\tilde{v}^2_g}{4\beta^2\tilde{\sigma}^4_\bB + 2\tilde{L}_g^2}, \beta\gamma\right).
\end{align}
\end{corollary}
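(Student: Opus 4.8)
The plan is to obtain Corollary~\ref{stepsize} directly from the hypotheses of Theorem~\ref{main}, treating it as the maximization of the free margin $\delta$ over the admissible step-size region. First I would recall that, in the proof of Theorem~\ref{main}, the per-step contraction with coefficient $1/(1+\delta)$ is obtained only after the quadratic form generated by the three update relations is forced to be negative semidefinite; this simultaneously imposes two kinds of requirements. The first are the feasibility caps
\begin{align*}
    \alpha_1 \leq \Big((1+\beta\gamma)\tilde{\sigma}^2_\bA + \tfrac{\tilde{L}_f^2}{\tilde{v}_f}\Big)^{-1}, \qquad \alpha_2 \leq \Big(\tfrac{2\beta^2\tilde{\sigma}^4_\bB}{\tilde{v}_g} + \tfrac{\tilde{L}_g^2}{\tilde{v}_g}\Big)^{-1},
\end{align*}
which guarantee definiteness of the primal blocks. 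The second are the coupling constraints that tie the contraction margin to the step sizes, namely $\delta \leq \tilde{v}_f\,\alpha_1$ from the $\bx$-block, $\delta \leq \tfrac{\tilde{v}_g}{2}\,\alpha_2$ from the $\by$-block, and $\delta \leq \beta\gamma$ from the perturbed dual update (the last one being the source of the third argument of the minimum).

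The key observation is that the two coupling bounds $\tilde{v}_f\alpha_1$ and $\tfrac{\tilde{v}_g}{2}\alpha_2$ are strictly increasing in $\alpha_1$ and $\alpha_2$, respectively, whereas $\beta\gamma$ is independent of the step sizes. Hence, to make $\delta$ as large as the analysis permits, I would push both step sizes to the upper end of their feasible intervals, i.e. set $\alpha_1$ and $\alpha_2$ to exactly the two expressions displayed in Corollary~\ref{stepsize}. Substituting these maximal values yields the two algebraic identities
\begin{align*}
    \tilde{v}_f\,\alpha_1 = \frac{\tilde{v}_f}{(1+\beta\gamma)\tilde{\sigma}^2_\bA + \frac{\tilde{L}_f^2}{\tilde{v}_f}}, \qquad \frac{\tilde{v}_g}{2}\,\alpha_2 = \frac{\tilde{v}_g^2}{4\beta^2\tilde{\sigma}^4_\bB + 2\tilde{L}_g^2},
\end{align*}
which are precisely the first two arguments of the minimum in the claimed formula for $\delta$. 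Taking $\delta$ equal to the minimum of these two quantities and $\beta\gamma$ then exhausts all three constraints at once, so the contraction coefficient $1/(1+\delta)$ in~\eqref{eq:linear_conv} is well defined with this value.

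To close the argument I would invoke optimality: since the asymptotic tracking bound of Corollary~\ref{linear} is proportional to $1/\delta$, any feasible choice with a smaller $\delta$ yields a looser bound, so the step-size selection above — which attains the largest admissible $\delta$ — is the one that minimizes the worst-case tracking error, exactly as stated.

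I expect the only genuine obstacle to be bookkeeping rather than anything conceptual: one must extract from the (lengthy) proof of Theorem~\ref{main} the exact step-size dependence of the margin constraints and confirm that the contraction analysis indeed delivers $\delta \leq \tilde{v}_f\alpha_1$ and $\delta \leq \tfrac{\tilde{v}_g}{2}\alpha_2$ — with those precise coefficients, and not, say, a form carrying a different power of $\tilde{v}_g$ — while simultaneously checking that saturating the caps $\alpha_1,\alpha_2$ keeps the governing quadratic form negative semidefinite. Once this monotone dependence is pinned down, the corollary reduces to the two substitutions above together with the observation that larger $\delta$ is always preferable.
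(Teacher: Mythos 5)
Your proposal is correct and follows essentially the same route as the paper: the paper's own argument also saturates the step-size upper bounds from \eqref{eqn:alpha1}--\eqref{eqn:alpha2} (with $\rho_4=\tilde{v}_f$, $\rho_5=\tilde{v}_g$, $\rho_3=\tilde{v}_g/(2\beta\tilde{\sigma}^2_\bB)$, which give exactly your coupling constraints $\delta\leq\tilde{v}_f\alpha_1$, $\delta\leq\tfrac{\tilde{v}_g}{2}\alpha_2$, and $\delta\leq\beta\gamma$) and then reads off $\delta$ as the minimum of the three resulting quantities. The monotonicity/optimality observation you add matches the paper's stated motivation that maximizing $\delta$ minimizes the worst-case tracking bound.
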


Notice that, for example, if we specify $\beta=0.5$ and $ \gamma=1$, then $\alpha_1,\alpha_2, \delta$ depend only  on the problem itself; i.e,
\vspace{-2mm}
\begin{align}
    \delta &= \min\left(\frac{\tilde{v}_f}{\frac{3}{2}\tilde{\sigma}^2_\bA + \frac{\tilde{L}_f^2}{\tilde{v}_f}},\frac{\tilde{v}^2_g}{\tilde{\sigma}^4_\bB + 2\tilde{L}_g^2}, \frac{1}{2}\right),\label{eqn:delta1}\\
    \alpha_1 &= \frac{1}{\frac{3}{2}\tilde{\sigma}^2_\bA + \frac{\tilde{L}_f^2}{\tilde{v}_f}}, ~\alpha_2 = \frac{1}{\frac{\max\tilde{\sigma}^4_\bB}{2\tilde{v}_g}+ \frac{\tilde{L}_g^2}{\tilde{v}_g}}.\label{eqn:a12}
\end{align}
As long as one picks $\delta$ as in \eqref{eqn:delta1}, there exist suitable $\alpha_1,\alpha_2$ to ensure convergence (see \eqref{eqn:alpha1}--\eqref{eqn:alpha2} in the proof).

\section{Example of Motivating Applications}\label{simulation}

\subsection{Multi-area power grid optimization}\label{power}

In this section, we briefly outline an example in power grids. We consider a distribution network featuring distributed energy resources (DERs), and we apply the proposed methodology to drive the DER output powers to the solution of an optimization problem encapsulating voltage constraints and given performance objectives. We demonstrate that the proposed methodology is amenable to settings where the distribution system is partitioned in areas; each area is autonomously controlled, and it ``trades'' power with adjacent areas based on given economic objectives~\cite{autonomous_grids}. In contrast,  previous works in the context of real-time optimal power flow involve centralized algorithms~\cite{tang2017real,hauswirth2016projected} or algorithms with a gather-and-broadcast architecture~\cite{dall2018optimal}.     

Similar to~\cite{autonomous_grids}, consider partitioning a power distribution network into $C$ clusters, and denote as $\cC_{i}$ the set of electrical nodes within cluster $i = 1, \ldots, C$. Two clusters $i$ and $j$ are adjacent if there is at least an electrical node $i$ such that $i \in \cC_{i}$ and $i \in \cC_{j}$. Let $\cB_{i,j} := \cC_{i} \cap \cC_{j}$ be the set of \emph{boundary nodes} connecting cluster $i$ to cluster $j$, and define $\cB_{i} := \cup_{j\neq i} \cB_{i,j}$. Further, let $\cI_{i} := \cC_{i} \backslash \cB_{i}$ be the set of \emph{internal nodes} for cluster $i$. For future developments, let $N_{i} := |\cI_{i}|$ be the number of internal nodes if cluster $i$, and let $\cN_{i} \subset \{1, \ldots, C\}$ be the set of neighboring clusters of the $i$th one (i.e., cluster connected to the $i$th one). 

Let $\bx_{j}^{i} := [P_{j}^{i}, Q_{j}^{i}]^\sfT \in \reals^2$ collect the net active and reactive powers injected by DERs at the node $j \in \cI_{i}$ of cluster $i$. Particularly, $\bx_{j}^{i} $ can represent the powers injected by \emph{one} DER located at node $j$, or the aggregate net power injections of \emph{a group of} DERs located at node $j$ (e.g., a household with multiple controllable devices) and we stack the setpoints $\{\bx_{j}^{i}\}_{ j \in \cI_{i}}$ in the vector $\bx^{i} \in \reals^{2 N_{i}}$. If no controllable DERs are present at a given location, the corresponding vector  $\bx_{j}^{i}$ is  set to $\mathbf{0}$\footnote{For notation simplicity, the model is outlined for balanced systems and for the case where one household/building with DERs is located at a node. However, the model can be trivially extended to multiphase networks~\cite{linModels} and for the case where multiple households/buildings with DERs are located at a node (at the cost of increasing the complexity of the notation).}. On the other hand, $\bell_{j}^{i} \in \reals^2$ denotes the net non-controllable loads at node $j \in \cI_{i}$, and $\bell^{i} \in \reals^{2 N_{i}}$ stacks the loads $\{\bell_{j}^{i}\}_{ j \in \cI_{i}}$. It is assumed that no DERs and no non-controllable loads are located at the boundary nodes $\cB_{i,j}$. 

Let $V_{j}^{i} \in \mathbb{C}$ denote the complex line-to-ground voltage phasor at node $j$ of cluster $i$, and let $\bv^{i} := [\{|V_{j}^{i}|, j \in \cI_{i}\}]^\sfT$ be the vector of voltage  magnitudes of the internal nodes $\cI_{i}$. For each pair of neighboring clusters $(i,j)$, let $\bx_{n}^{j \rightarrow i} := [P_{n}^{j \rightarrow i}, Q_{n}^{j \rightarrow i}]^\sfT \in \reals^2$ represent the active and reactive powers flowing into area $i$ from area $j$ through node $n \in \cB_{i,j} $; on the other hand, $\bx_{n}^{i \rightarrow j} \in \reals^2$ contains the active and reactive powers flowing into area $j$ from area $i$ through node $n \in \cB_{i,j} $. From Kirchhoff's Law, it holds that $\bx_{n}^{j \rightarrow i} + \bx_{n}^{i \rightarrow j} = \mathbf{0}$. To facilitate the syntheses of computationally-affordable algorithms, we leverage the following approximate linear relationship between net injected power and voltage magnitude (see e.g.,~\cite{dall2018optimal} and references therein):
\begin{subequations} \label{relation1}
\begin{align}
\tilde{\bv}^{i}  &:=  \hspace{-2mm} \sum_{j \in \cI_{i}} \bA_{j}^{i} (\bx^{i}_j - \bell^{i}_j) +\hspace{-2mm}  \sum_{j \in \cN_{i}} \sum_{n \in \cB_{i,j}}\hspace{-2mm} \bA_{n}^{j \rightarrow i} \bx_{n}^{j \rightarrow i} \hspace{-1mm}+\hspace{-1mm} \ba, \label{eqn:lin_v} \\
 &=  \bA^{i} (\bx^{i} - \bell^{i}) +  \sum_{j \in \cN^{i}}  \bA^{j \rightarrow i} \bx^{j \rightarrow i} + \ba, \label{eqn:lin_v2} 
\end{align}
\end{subequations}
where $\bA^i = [\bA_j^i]_{j\in\cI_i}, \bA^{j\rightarrow i} = [\bA^{j\rightarrow i}_n]_{n\in\cB_{i,j}}, \ba$ are time-varying problem parameters derived from linearized power flow equation. 
Another linear relationship between net injected power and power between clusters is captured in the following equation:
\begin{subequations}\label{relation2}
\begin{align} 
\bx^{j \rightarrow i} &:=  \sum_{k \in \cI_{i}} \bM^{j \rightarrow i}_{k}  (\bx^{i}_k - \bell^{i}_k)  + \mathbf{m}^{j \rightarrow i}, \label{eqn:lin_s0} \nonumber\\  + & \sum_{k \in \cN_{i} \backslash \{j\}} \sum_{n \in \cB_{i,k}} \bM^{k, j \rightarrow i}_{n}\bx_{n}^{k \rightarrow i}\\
&=   \bM^{j \rightarrow i}  (\bx^{i}\hspace{-1mm} - \hspace{-1mm}\bell^{i}) + \mathbf{m}^{j \rightarrow i}+ \hspace{-4mm} \sum_{k \in \cN_{i} \backslash \{j\}} \hspace{-3mm}\bM^{k, j \rightarrow i} \bx^{j \rightarrow i}, \label{eqn:lin_s01} 
\end{align}
\end{subequations}
where $\bM^{j\rightarrow i} \hspace{-1mm}=\hspace{-1mm} [\bM^{j\rightarrow i}_k]_{k\in\cI_i}, \bM^{k,j\rightarrow i} \hspace{-1mm}=\hspace{-1mm} [\bM^{k,j\rightarrow i}_{n}]_{n\in\cB_{i,k}}$,$\bm^{j\rightarrow i}$ are also time-varying problem parameters depending on the actual network physics. 
All model parameters in \eqref{relation1}--\eqref{relation2} can be obtained as shown in~\cite{linModels}. 
Now we are ready to state our real-time OPF problem as follows:
\begin{subequations} 
\begin{align} 
&\hspace{-5mm}\min_{\substack{\{\bx^{i}\}, \{\bx_{n}^{j \rightarrow i}, \bx_{n}^{i \rightarrow j}\}}}  \hspace{.2cm} \sum_{i=1}^C  [f^{i}(\bx^{i}) + g^i(\{\bx^{j \rightarrow i}\})] \tag{P3} \label{obj} \\
 \text{s.t.} ~&\bx_{j}^{i} \in \cY_{j}^{i}, \forall \,\, j \in \cI^{i}, \,\, i = 1, \ldots C \label{eqn:constr_Y_exact}  \\
& v^{\text{min}} \mathbf{1} \leq \tilde{\bv}^{i} \leq v^{\text{max}} \mathbf{1}, \forall \,\, i = 1, \ldots C \label{cons}\\
& \bx^{j \rightarrow i} =   \bM^{j \rightarrow i}  (\bx^{i} - \bell^{i}) + \mathbf{m}^{j \rightarrow i}+\hspace{-5mm}  \sum_{k \in \cN_{i} \backslash \{j\}} \hspace{-5mm}\bM^{k, j \rightarrow i} \bx^{j \rightarrow i} \nonumber\\
&\hspace{3cm}, \, \forall \, j \in \cN_{i}, \, \, i = 1, \ldots, C \label{between}\\
& \bx^{j \rightarrow i} + \bx^{i \rightarrow j} = \mathbf{0} , \,\, \forall \,\, \textrm{neighboring~areas~} (i,j)
\end{align}
\end{subequations}
where the time-varying objective function models the amount of real power curtailed and the amount of reactive power injected or absorbed (which leads to non-smooth term in the objective, e.g. $\ell_1$ term). For notation simplicity, we write objective function in \eqref{obj} as $\Psi(\bx)$. Consider $\bM^{j\rightarrow i}$ consists of $1,0$, with $1$ for real power, $0$ for reactive power. Putting \eqref{eqn:lin_s01} back to \eqref{eqn:lin_v2}, adding slack variables $\bgamma^i,\bbeta^i$ to \eqref{cons} formulate equality constraints, and adding strongly convex term w.r.t $\bgamma = \{\bgamma^i\},\bbeta = \{\bbeta^i\}$ we have the following formulation:
\begin{subequations} 
\begin{align} 
 &\hspace{-0.5cm}\min_{\substack{\{\{\bx^{i}\}, \{\bx_{n}^{j \rightarrow i}, \bx_{n}^{i \rightarrow j}\}\},\{\bgamma^{i},\bbeta^{i}\geq 0\}\} } } \Psi(\bx)+ a\|\bgamma\|^2 + b\|\bbeta\|^2\tag{P4} \label{obj2} \\
 \text{s.t.~} & \bx_{j}^{i} \in \cY_{j}^{i},  \forall \,\, j \in \cI_{i}, \,\, i = 1, \ldots C \label{eqn:constr_Y_exact}  \\
& v^{\text{min}}\mathbf{1} -\tilde{\bv}^{i} + \bgamma^{i}= \mathbf{0},  \forall \,\, i = 1, \ldots C \label{cons2}\\
&\tilde{\bv}^{i} + \bbeta^{i} - v^{\text{max}}\mathbf{1}=\mathbf{0} ,  \forall \,\, i = 1, \ldots C \label{cons3}\\
& \bx^{j \rightarrow i} =   \bM^{j \rightarrow i}  (\bx^{i} - \bell^{i}) + \mathbf{m}^{j \rightarrow i}+\hspace{-2mm}  \sum_{k \in \cN_{i} \backslash \{j\}}\hspace{-2mm} \bM^{k, j \rightarrow i} \bx^{j \rightarrow i} \nonumber\\
&\hspace{3cm}, \, \forall \, j \in \cN_{i}, \, \, i = 1, \ldots, C \label{eqn:between}\\
& \bx^{j \rightarrow i} + \bx^{i \rightarrow j} = \mathbf{0} , \,\, \forall \,\, \textrm{neighboring~areas~} (i,j).\label{eqn:consensus}
 \end{align}
\end{subequations}
We can now clearly see a mapping from \eqref{obj2} to \eqref{P1}: objective functions
are $\Psi(\bx)$ and $a\|\bgamma\|^2 + b\|\bbeta\|^2$(where $a,b > 0$ are small); two blocks of variables are $\{\bx^{i},\bx^{j\rightarrow i}\}$ and $\{\bgamma^{i},\bbeta^{i}\}$; constraints are all linear and separable w.r.t each network node. \textcolor{black}{Problem \eqref{obj2} is time varying in both objective function and constraint parameters.} In order to better illustrate how the proposed algorithm can be applied, we use a 4-cluster network (see Figure \ref{fig:net}) as an example. First, we substitute $\bx^{j\rightarrow i}$ in \eqref{eqn:lin_v2} with \eqref{eqn:between}; then, we substitute $\tilde{\bv}^i$ in \eqref{cons2}--\eqref{cons3} with \eqref{eqn:lin_v2};
last, we define the corresponding augmented Lagrangian function as follows:
\vspace{-2mm}
\begin{align*}
&\mathcal{L}(\bx,\bgamma,\bbeta,\blambda) = \Psi(\bx) + a\|\bgamma\|^2 + b\|\bbeta\|^2 \nonumber\\
+ &\sum_{k}\hspace{-1mm}\sum_{i\in C_k}\frac{\rho}{2}\|\textcolor{black}{\sum_{j\in C_k}(\bA^{i}_j+\bA_j^{3})\bx^{i}_j  \hspace{-1mm}+\hspace{-1mm} \ba_k }\hspace{-1mm}+\hspace{-1mm}\bbeta^{i} \hspace{-1mm}-\hspace{-1mm} v^{\text{max}} \hspace{-1mm}+ \hspace{-1mm}\frac{\blambda_1(1-\gamma)}{\rho}\|^2\\
+&\sum_{k}\hspace{-1mm}\sum_{i\in C_k} \frac{\rho}{2}\|v^{\text{min}} \hspace{-1mm}\textcolor{black}{-\hspace{-1mm}\sum_{j\in C_k} (\bA^{i}_j\hspace{-1mm}+\hspace{-1mm}\bA^{3}_j)\bx^{i}_j \hspace{-1mm}- \hspace{-1mm}\ba_k} \hspace{-1mm}+\hspace{-1mm} \bgamma^{i}\hspace{-1mm}+\hspace{-1mm}\frac{\blambda_2(1-\gamma)}{\rho}\|^2\\
+& \sum_{k}\frac{\rho}{2}\|\sum_{i}\bx^{i\rightarrow j} - \textcolor{black}{\sum_{j\in C_k} \bx^{j}} + \frac{\blambda_4(1-\gamma)}{\rho}\|^2\\
+&\frac{\rho}{2}\|\sum_{i\neq j}\bx^{i\rightarrow j}+\frac{\blambda_3(1-\gamma)}{\rho}\|^2 
\end{align*}

The detailed updates follow the same way as \eqref{perturb} and from Table I we know that linear convergence to AKKT is guaranteed. To further improve our algorithm for this particular application. We incorporate system measurements in both primal and dual updates in the following way:
\begin{align}
\sum_{j\in C_k}(\bA^{i}_j+\bA_j^{3})\bx^{i}_j +\ba_k &\rightarrow \phi(\bx),~ \sum_{j\in C_k}\bx^{j}\rightarrow \psi(\bx),
\end{align}
where $\phi(\bx), \psi(\bx)$ are measurements. This is beneficial in that: i) A natural distributed computing scheme is achieved while without feedback it is not clear whether the algorithm can be implemented in a distributed way; ii) feedback terms are much less than uncontrollable terms, which essentially shrinks the measuring time; iii) it is easier to satisfy power flow equations with the help of system measurements.

\begin{figure}
\centering
\includegraphics[width=8cm]{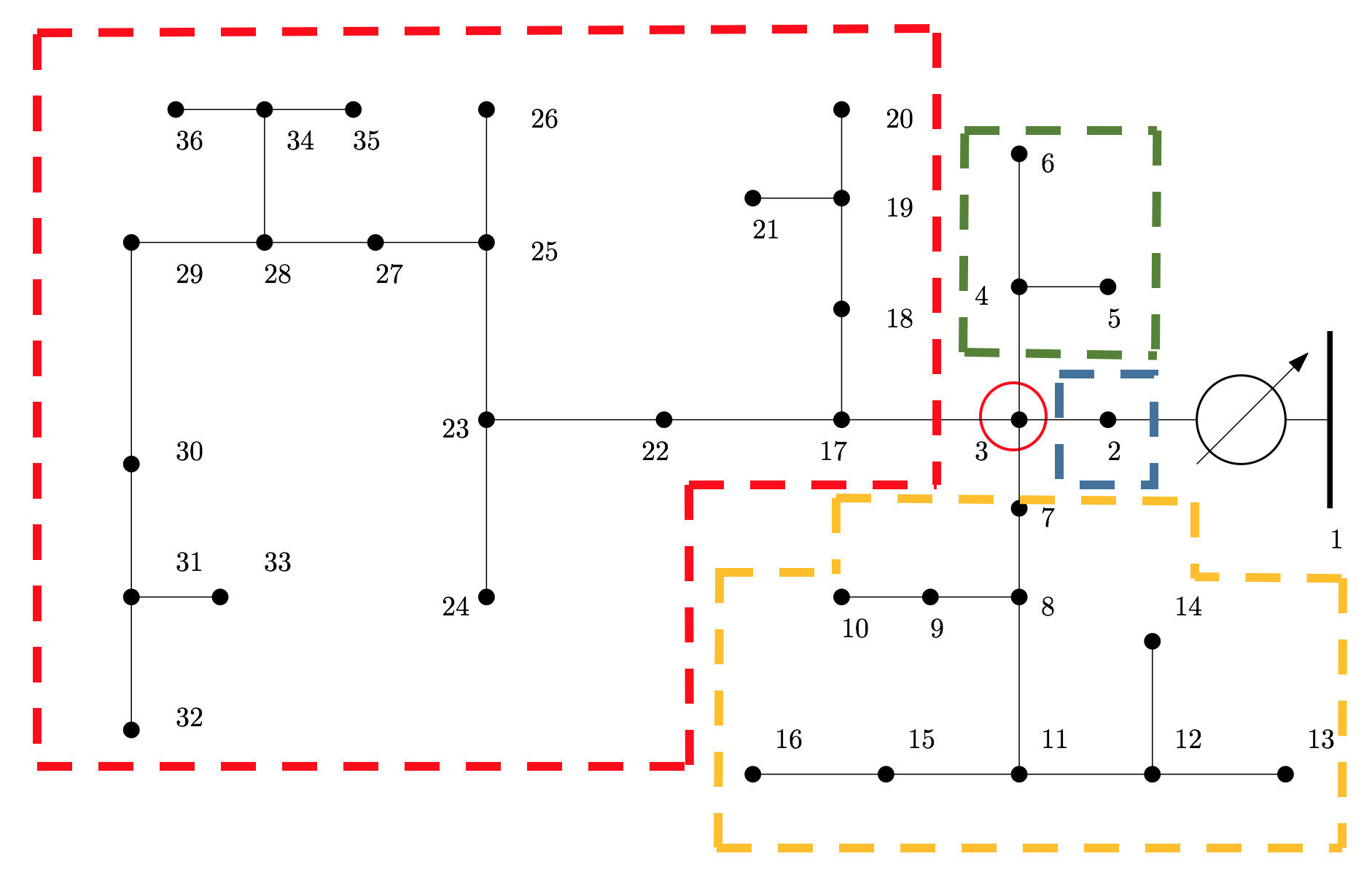} 
\caption{power distribution network with 4 clusters, node 3 is a boundary node that belongs to all 4 clusters.}
\label{fig:net}
\end{figure}

\subsection{Simulations}
In this section, we test our algorithm using the same power systems settings. We consider a similar system as in \cite{dall2018optimal}, where a modified IEEE 37-node test feeder is utilized. The network is obtained by considering a single phase equivalent, and by replacing the loads on phase ``c" specified in the original dataset with real load data measured from feeders in a neighborhood called Anatolia in California during a week in August 2012. It is assumed that the aggregations of photovoltaic systems are located at nodes 4, 7, 10, 13, 17, 20, 22, 23, 26, 28, 29, 30, 31, 32, 33, 34, 35, and 36. The rating of  these inverters are 300kVA for $i=3$, 350kVA for $i=15, 16$ and 200kVA for the remaining ones. The objective is set to be $f^i(\bx^i) = c_p(P_{\text{av},i} - P_i)^2 + c_q(Q_i)^2 + \bar{c}_q|Q_i|, g^i(\bx^{j\rightarrow i}) = 0$ where $P_{\text{av},i}$ is the maximum real power available from the PV system $i$, and $c_p = 3, c_q = 1, \bar{c}_q = 0.1$. The voltage limits are set to be $V^{\text{min}} = 0.95$pu, $V^{\text{max}} = 1.05$pu. The generation profiles are simulated based on real solar irradiance data and have a granularity of 1 second. 
\begin{figure}
	\centering
	\includegraphics[width=8cm]{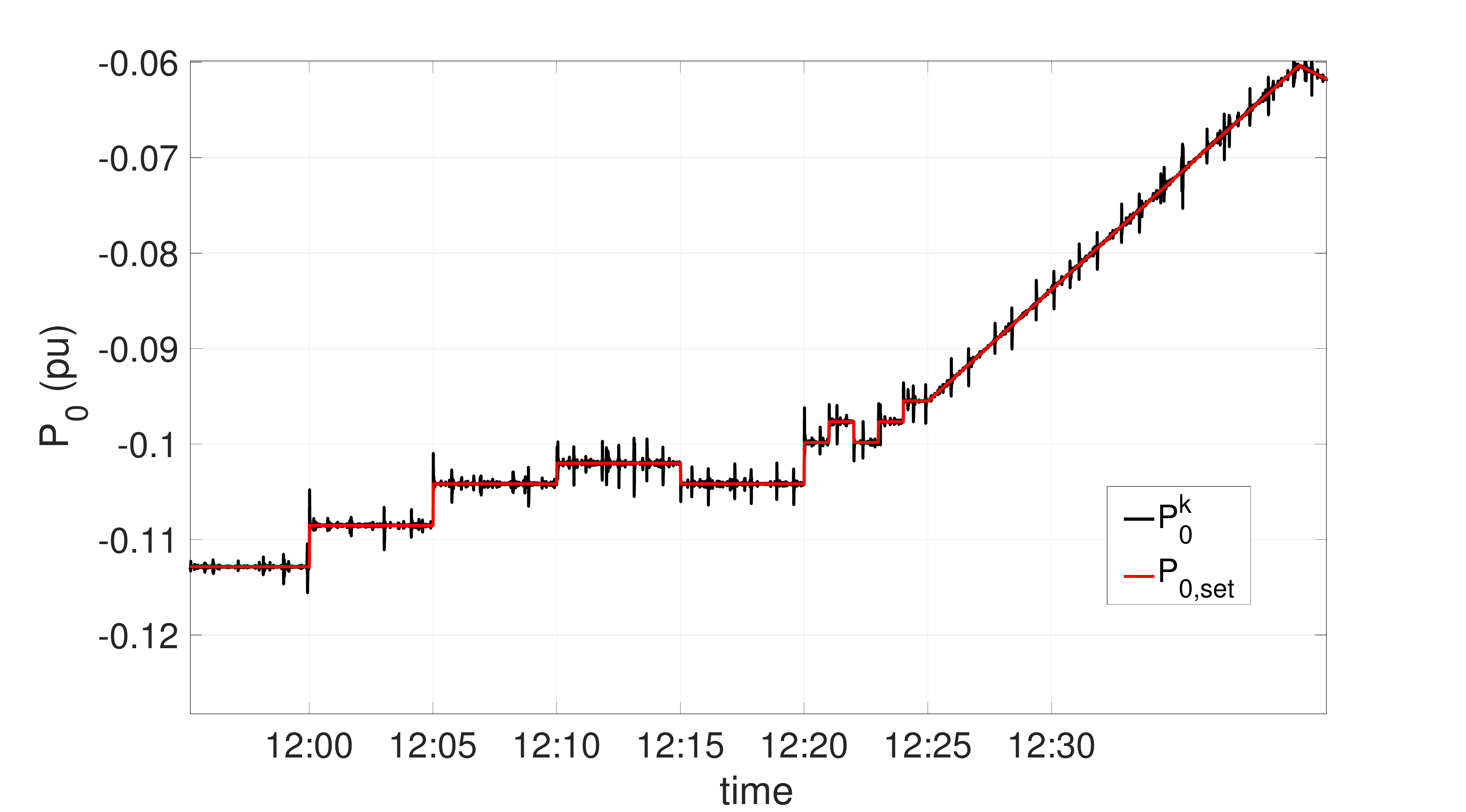}
	\caption{Real power at feeder head during 12:00-12:30.}
	\vspace{-.2cm}
	\label{f2}
\end{figure}
First we specify a given trajectory for the power at the common coupling, which is color-coded in red in Fig. \ref{f2} (negative power indicates reverse power flows). It can be seen that our algorithm is able to regulate $P_0^k$ close to $P_{0,\text{set}}^k$ in real time. Figure \ref{f3} illustrates the voltage profiles for selected nodes. From 10:00 to 12:00 we observe a few flickers, which is caused by rapid variations of the solar irradiance. Other than that, it can be seen that voltage regulation is enforced and a flat voltage profile is obtained. Note that even there are some relatively large jumps from around 12:00 to 14:00, our algorithm is still able to track the optimal trajectory. A comparison with double smoothing algorithm \cite{dall2018optimal} is presented in Figure \ref{f4}. The proposed strategy has potentially better voltage regulation ability, especially for extreme cases e.g. the two spikes from 10:00 to 12:00. 
\begin{figure}
	\centering
	\includegraphics[width=8cm]{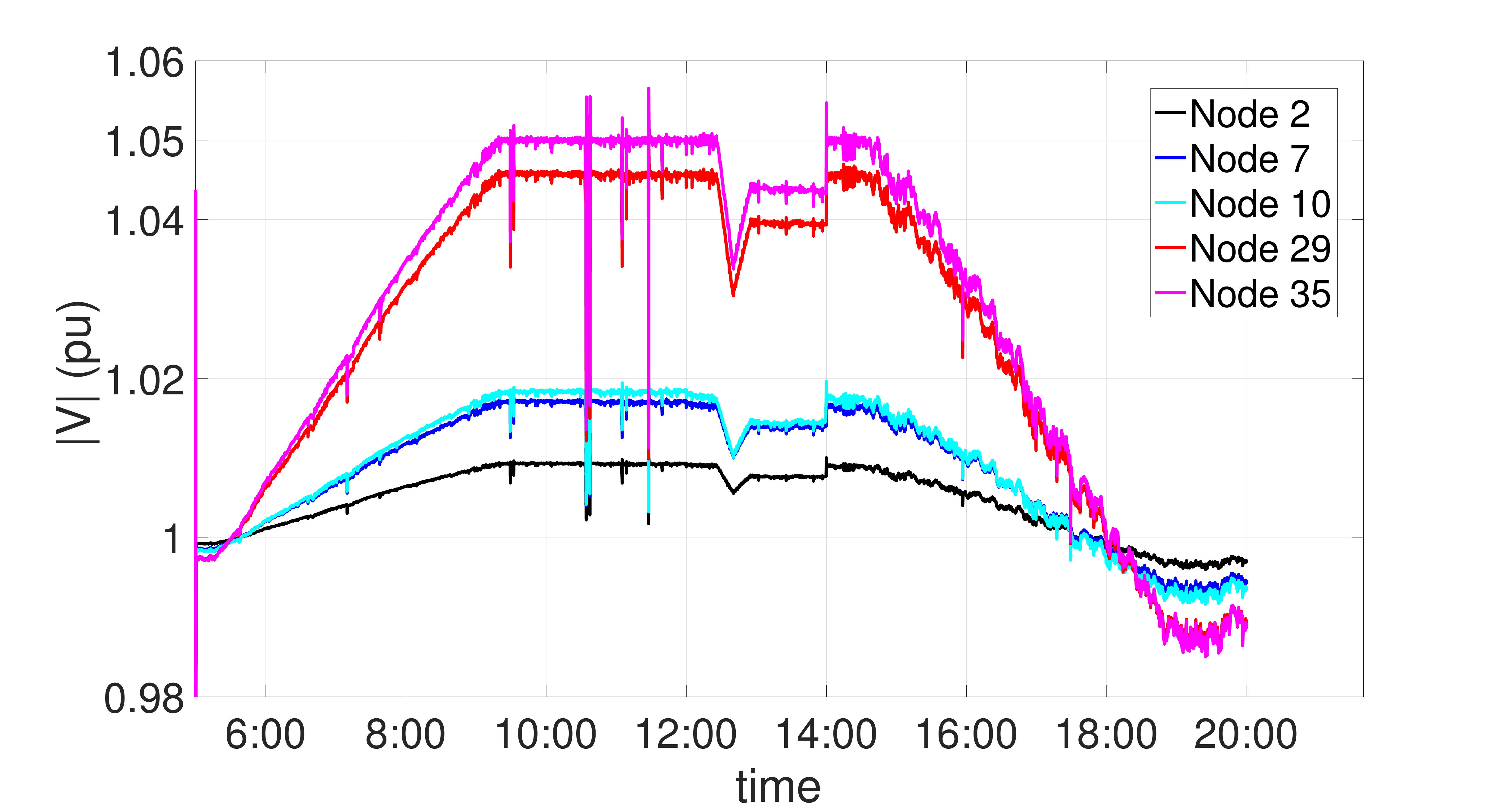}
	\caption{Voltage  profile achieved (only some nodes are considered for illustration purposes). }
	\vspace{-.2cm}
	\label{f3}
\end{figure}
\begin{figure}
	\centering
	\includegraphics[width=8cm]{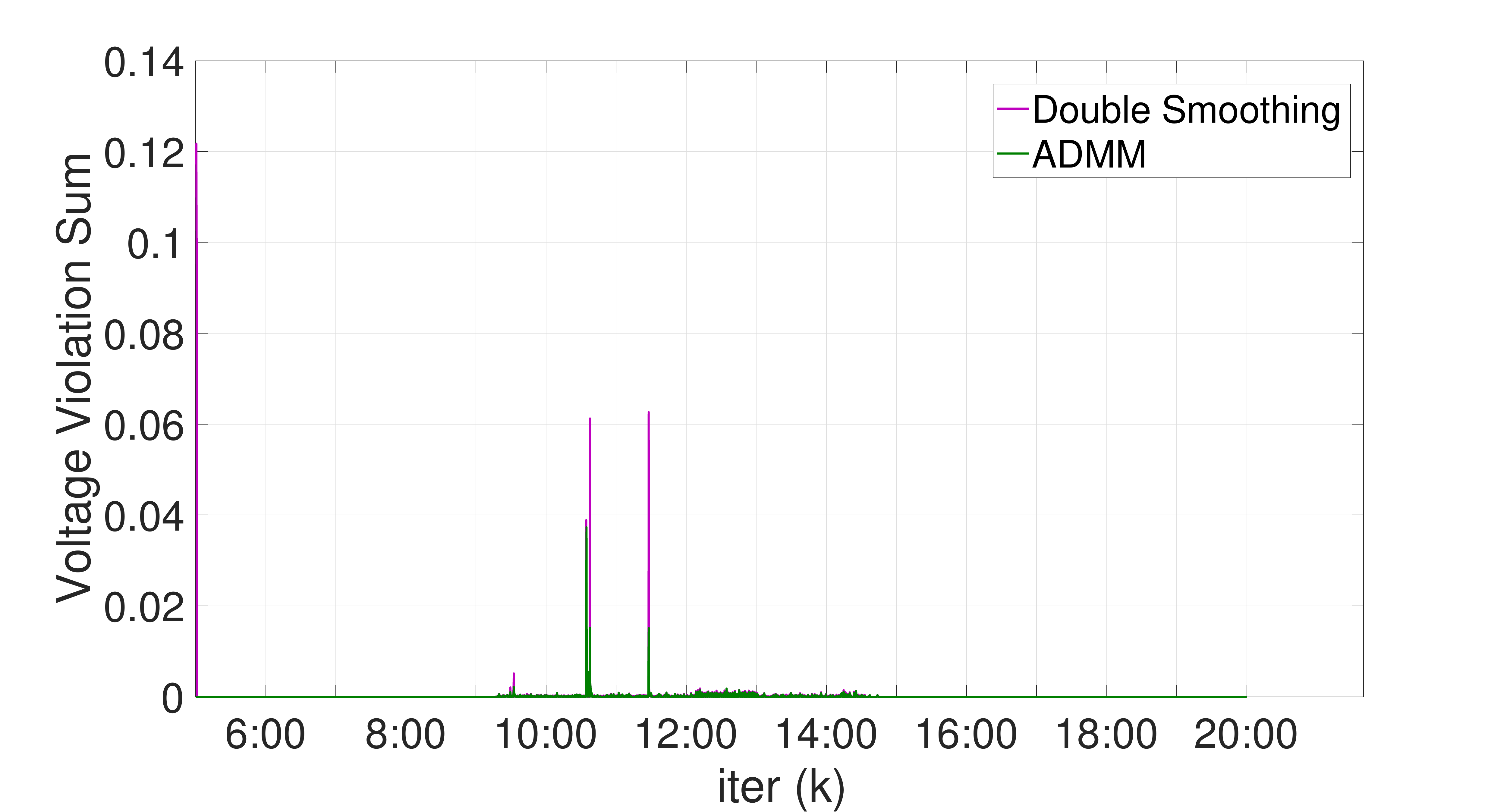}
	\caption{Index for the overall voltage violation across the system $\sum\limits_{n\in\mathcal{N}}\left(\max(|V_n^k| - v^{\text{max}} ,0)+\max(v^{\text{min}} - |V_n^k|,0)\right)$}
	\vspace{-.2cm}
	\label{f4}
\end{figure}
\begin{figure}[t]
        \centering
        \includegraphics[width=8cm]{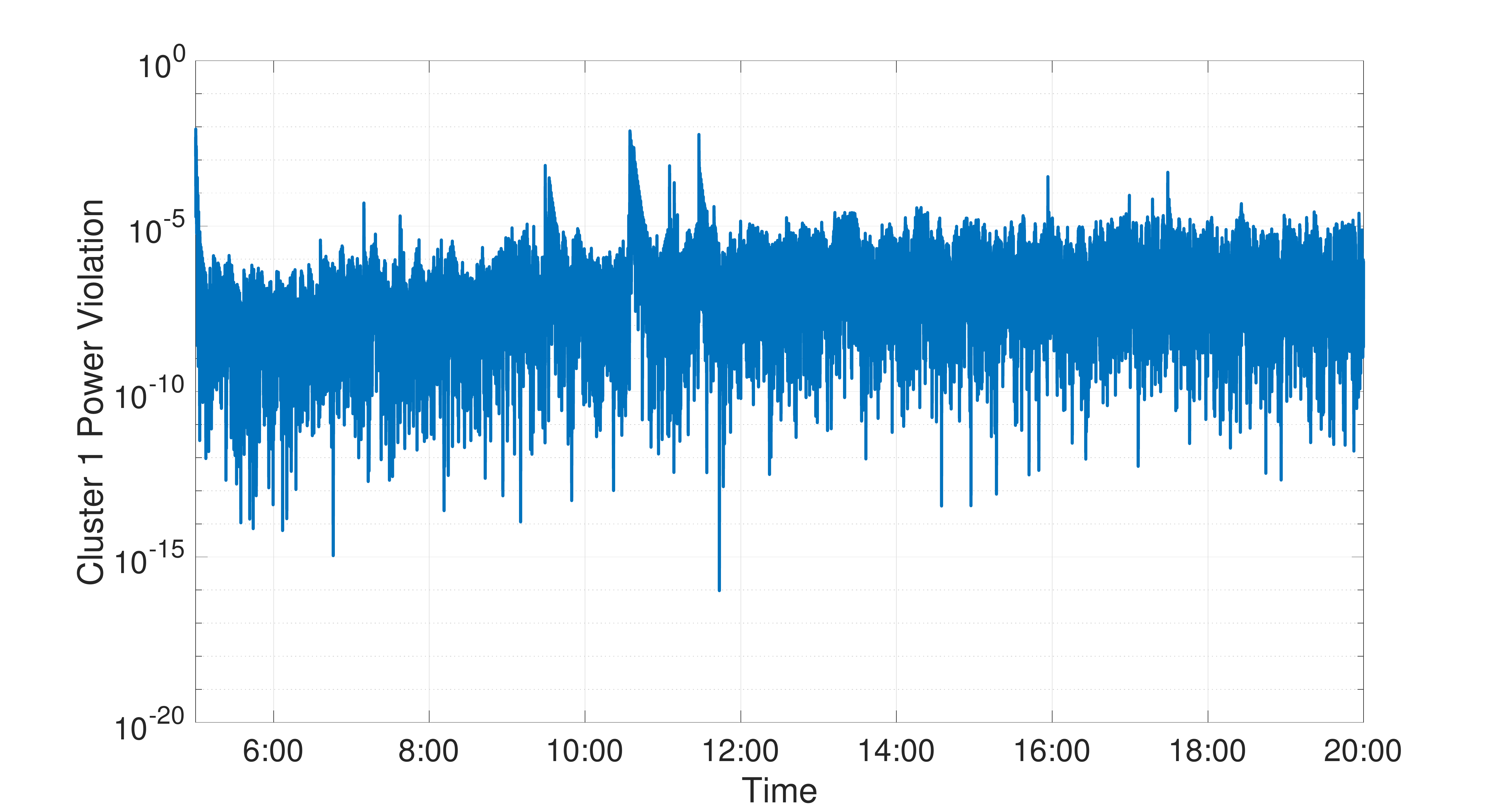}
        \caption{Power violation of Cluster 1: power violation for each cluster is defined as $\|\sum\limits_{i\in\mathcal{N}^{(j)}}\bx^{i\rightarrow j} - \sum\limits_{j\in C}\bx^{j}\|^2$}
        \label{power_vio}
    \end{figure}%
\begin{figure}[t]
        \centering
        \includegraphics[width=8cm]{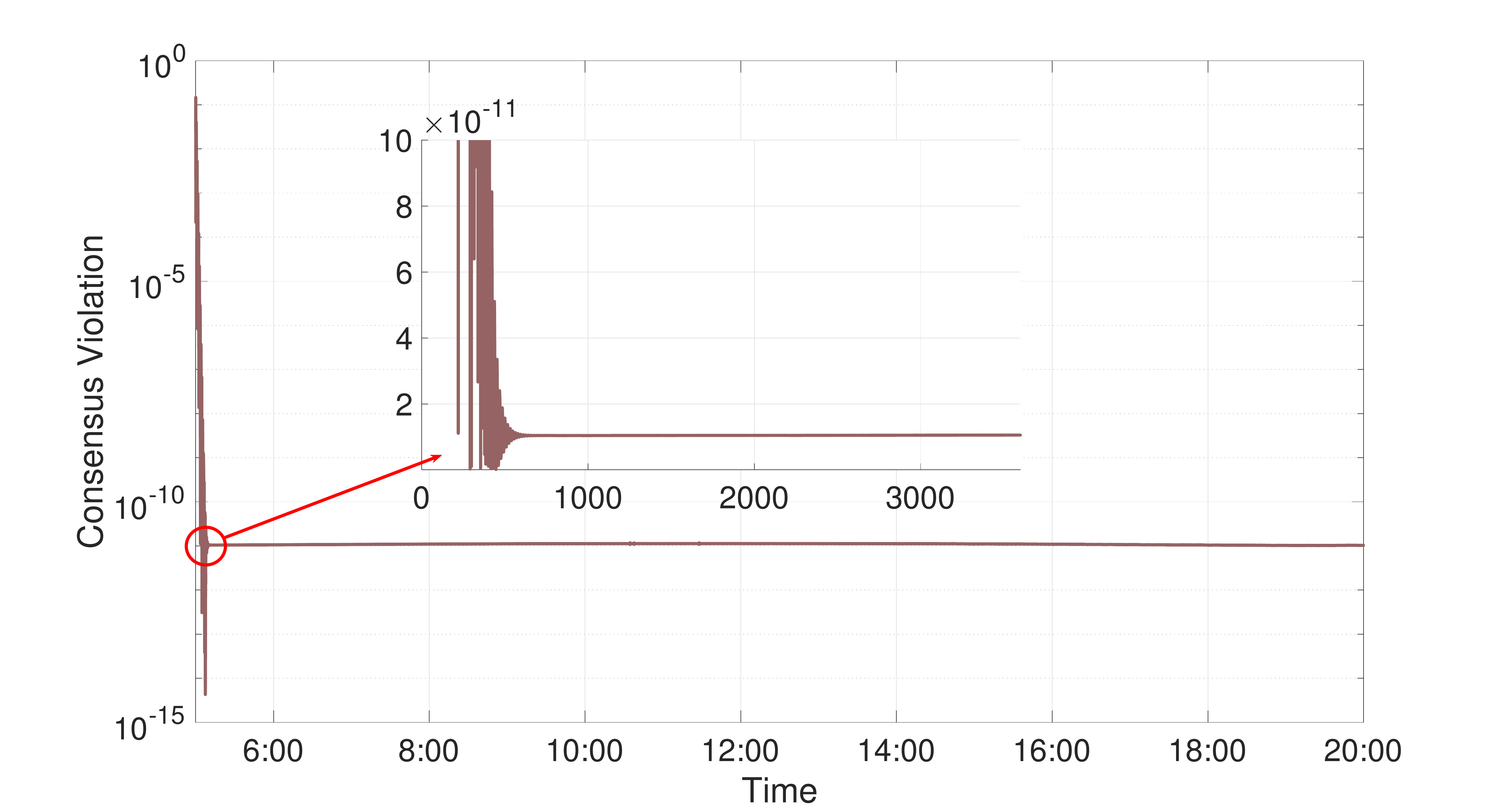}
        \caption{Consensus violation: $\|\bx^{(i\rightarrow j)}+\bx^{(j\rightarrow i)}\|^2$}
        \label{cons_power}
    \end{figure}
We proceed to test in the same setting except we are adding consensus constraints. In Figure \ref{power_vio} we can see that for all 4 clusters, power violation decreases dramatically in first a few minutes and remains at a  low level of $10^{-10}$. The power consensus violation is shown in Figure \ref{cons_power}, where a steep drop at the begging and flat low line after that are observed.

\section{Conclusion}
This paper gives a general online optimization problem formulation and proposes a online algorithm based on alternating direction method of multipliers that can continuously track optimal solution in real time. The steps of ADMM are proximal gradient steps with modification of adding perturbation to dual variable and incorporating system feedback for certain applications. The resulting algorithm is proved to converge to a neighborhood of optimal solution for each time instance. Numerical results for power systems applications also demonstrate the practicality of the proposed algorithm. Our future research will focus on general online nonconvex optimization problems.

\appendices

\subsection{Proof of Theorem \ref{main}}

To show the result of Theorem~\ref{main}, we start from the following lemma. 

\begin{lemma}
\label{lem:linear}
Under the assumptions of Theorem~\ref{main}, it holds that:
\begin{align}
\label{eqn:lin}
\|\bw^{(k)}-\bw^{*,(k)}\|^2_\bG \geq (1+\delta)\|\bw^{(k+1)}-\bw^{*,(k)}\|^2_\bG .
\end{align}

\end{lemma}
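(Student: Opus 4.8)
The plan is to prove the one-step contraction \eqref{eqn:lin} by expanding the two $\bG$-weighted distances and lower-bounding the resulting cross term through the optimality conditions of the three updates \eqref{itr:y}--\eqref{perturb_dual} together with the AKKT system \eqref{x_opt}--\eqref{dual_opt} (treating the problem as static and dropping the time index on the parameters, as in \eqref{AKKT}). First I would record the first-order optimality condition of each primal proximal step. From \eqref{itr:y} there is a subgradient $\bs_y \in \partial g_0(\by^{(k+1)})$ with
\[
\tfrac{1}{\alpha_2}(\by^{(k)}-\by^{(k+1)}) = \bs_y + \nabla g_1(\by^{(k)}) + \beta\bB^T(\bA\bx^{(k)}+\bB\by^{(k)}-\bb) - (1-\beta\gamma)\bB^T\blambda^{(k)},
\]
the analogue from \eqref{itr:x} holds for $\bx$ with a subgradient $\bs_x \in \partial f_0(\bx^{(k+1)})$ and the residual evaluated at $(\bx^{(k)},\by^{(k+1)})$, and \eqref{perturb_dual} rewrites as $\tfrac{1}{\beta}(\blambda^{(k+1)}-(1-\beta\gamma)\blambda^{(k)}) = -(\bA\bx^{(k+1)}+\bB\by^{(k+1)}-\bb)$. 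Together these say precisely that $\bG(\bw^{(k)}-\bw^{(k+1)})$ is a residual assembled from subgradients, gradients of $f_1,g_1$, and the (perturbed) constraint map.

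Next I would apply the identity
\[
\|\bw^{(k)}-\bw^*\|^2_\bG - \|\bw^{(k+1)}-\bw^*\|^2_\bG = 2\langle\bG(\bw^{(k)}-\bw^{(k+1)}),\,\bw^{(k+1)}-\bw^*\rangle + \|\bw^{(k)}-\bw^{(k+1)}\|^2_\bG
\]
and substitute the three identities blockwise into the inner product. Pairing each block against the matching AKKT condition makes the ``diagonal'' terms collapse: for the $\bx$- and $\by$-blocks, the subgradient-plus-gradient differences $(\bs_x+\nabla f_1)-(\cdots)$ tested against $\bx^{(k+1)}-\bx^*$ (resp. $\by^{(k+1)}-\by^*$) are bounded below by strong convexity (Assumption~\ref{Assum_uniform}) by $\tilde{v}_f\|\bx^{(k+1)}-\bx^*\|^2$ and $\tilde{v}_g\|\by^{(k+1)}-\by^*\|^2$; for the dual block, combining the constraint residual with the AKKT relation $\bA\bx^*+\bB\by^*-\bb+\gamma\blambda^*=0$ is what injects a $\beta\gamma\|\blambda^{(k+1)}-\blambda^*\|^2$-type term, i.e.\ the perturbation is the sole source of strong monotonicity in the dual direction (and is exactly the origin of the $\beta\gamma$ entry in $\delta$).

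The remaining labor is to dominate the off-diagonal error terms by this quadratic budget. They arise from two effects: the primal gradients are evaluated at the \emph{stale} points $\bx^{(k)},\by^{(k)}$ rather than $\bx^{(k+1)},\by^{(k+1)}$, which I would control by Lipschitz continuity, $\|\nabla f_1(\bx^{(k)})-\nabla f_1(\bx^{(k+1)})\|\le\tilde{L}_f\|\bx^{(k)}-\bx^{(k+1)}\|$ and its $\tilde{L}_g$ analogue; and the Gauss--Seidel coupling through $\bA,\bB$ (e.g.\ the term $\bB(\by^{(k)}-\by^{(k+1)})$ in the $\bx$-block and the residual mismatches paired against the dual), which I would control using the operator-norm bounds $\tilde{\sigma}_\bA,\tilde{\sigma}_\bB$ from Assumption~\ref{Assum_Bounded}. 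Each such cross term I would split with Young's inequality $2\langle\bu,\bv\rangle\le\eta\|\bu\|^2+\eta^{-1}\|\bv\|^2$, choosing the weights so the negative squares are absorbed by the strong-convexity lower bounds $\tilde{v}_f,\tilde{v}_g$ and by the quadratic term $\|\bw^{(k)}-\bw^{(k+1)}\|^2_\bG$. Imposing exactly the step-size bounds on $\alpha_1,\alpha_2$ and the bound on $\delta$ from the theorem statement then leaves a nonnegative surplus of at least $\delta\|\bw^{(k+1)}-\bw^*\|^2_\bG$, which is \eqref{eqn:lin}.

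The main obstacle I expect is the joint bookkeeping of these cross terms: a single quadratic budget (the $\tilde{v}_f,\tilde{v}_g$ from strong convexity plus the $\|\bw^{(k)}-\bw^{(k+1)}\|^2_\bG$ term) must simultaneously outweigh the Lipschitz error, the $\tilde{\sigma}_\bA^2,\tilde{\sigma}_\bB^2$ coupling, and the perturbation terms. Making the precise combinations $(1+\beta\gamma)\tilde{\sigma}_\bA^2+\tilde{L}_f^2/\tilde{v}_f$ and $2\beta^2\tilde{\sigma}_\bB^4/\tilde{v}_g+\tilde{L}_g^2/\tilde{v}_g$ -- the reciprocals of the admissible $\alpha_1,\alpha_2$ -- emerge from the accounting requires careful tracking of which Young's weight charges which budget, and that is where essentially all of the calculation sits.
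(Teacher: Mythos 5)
Your plan matches the paper's proof essentially step for step: first-order optimality conditions of \eqref{itr:y}--\eqref{perturb_dual} paired against the AKKT system, strong convexity to generate the $\tilde{v}_f,\tilde{v}_g$ lower bounds, the dual perturbation as the sole source of the $\beta\gamma$ contraction, the norm-difference identity, and Young's inequality with carefully chosen weights (the paper's $\rho_1,\dots,\rho_5$) to absorb the stale-gradient and Gauss--Seidel cross terms into the step-size conditions. The approach and the resulting conditions on $\alpha_1,\alpha_2,\delta$ are the same as in the paper; only the detailed bookkeeping remains to be written out.
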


\begin{proof}
Since the lemma focuses on a particular time instant $k$, we replace $\bw^{*,(k)}$ with $\bw^{*}$ for notation simplicity. Also, to better fit the equations in the columns, we replace the subscript $^{(k)}$ and $^{(k+1)}$ with $^{k}$ and $^{k+1}$, respectively, in the proof.

From the optimality condition of subproblem \eqref{itr:y}, one has that: 
\begin{align}
&\bA^T\bflambda^k(1-\beta\gamma) - \beta \bA^T(\bA\bx^{k} + \bB\by^{k+1}-\bb)\nonumber\\-&\nabla f_1(\bx^k) + \frac{1}{\alpha_1}\bI(\bx^k\hspace{-1mm}-\bx^{k+1})\in\partial f_0(\bx^{k+1}).\nonumber
\end{align}
Rearrange terms we get
\begin{align}
& \bA^{T}\bflambda^{k+1}\hspace{-1mm}+\hspace{-1mm}\beta\bA^T\bA(\bx^{k+1}\hspace{-1mm} - \hspace{-1mm}\bx^{k})\hspace{-1mm} +\hspace{-1mm} \nabla f_1(\bx^{k+1})\hspace{-1mm} -\hspace{-1mm} \nabla f_1(\bx^k) \nonumber\\
+& \frac{1}{\alpha_1}\bI(\bx^k-\bx^{k+1})\in\partial f_0(\bx^{k+1}) + \nabla f_1(\bx^{k+1}).\label{x_k}
\end{align}
Also from optimality condition of subproblem and \eqref{itr:x}, one has that
\begin{align*}
&\bB^T\bflambda^k(1\hspace{-1mm}-\hspace{-1mm}\beta\gamma)\hspace{-1mm} - \hspace{-1mm} \beta \bB^{T}(\bA\bx^{k+1}\hspace{-1mm}+\hspace{-1mm} \bB\by^{k+1}\hspace{-1mm} -\hspace{-1mm} \bb)\hspace{-1mm}\nonumber\\
+&\beta\bB^T\bA(\bx^{k+1}-\bx^k) + \beta\bB^T\bB(\by^{k+1}-\by^k)\nonumber\\
+& \frac{1}{\alpha_2}\bI(\by^k\hspace{-1mm} -\hspace{-1mm} \by^{k+1}) -\nabla g_1(\by^k)\in\partial g_0(\by^{k+1}) .\nonumber
\end{align*}
Rearrange terms we get
\begin{align}
& \bB^T(\bflambda^{k+1} + \beta \bA(\bx^{k+1}-\bx^k) +\beta\bB(\by^{k+1}-\by^k))\nonumber\\
&+ \nabla g_1(\by^{k+1}) - \nabla g_1(\by^k)+ \frac{1}{\alpha_2}\bI(\by^k-\by^{k+1}) \nonumber\\
\in &\partial g_0(\by^{k+1}) + \nabla g_1(\by^{k+1})\label{y_k}.
\end{align}
Furthermore, from the dual update \eqref{perturb_dual}, one can obtain:  
\begin{align}
\frac{1}{\beta}(\bflambda^k - \bflambda^{k+1}) = \gamma\bflambda^k + (\bA\bx^{k+1} + \bB\by^{k+1} - \bb),\label{dual}
\end{align}
and, together with optimality condition, one obtains: 
\begin{align}\label{dual_opt2}
\frac{1}{\beta}(\bflambda^k\hspace{-1mm} - \hspace{-1mm}\bflambda^{k+1})\hspace{-1mm} =\hspace{-1mm} \gamma(\bflambda^k\hspace{-1mm}-\hspace{-1mm}\bflambda^*) \hspace{-1mm}+\hspace{-1mm} \bA(\bx^{k+1}\hspace{-2mm}-\hspace{-1mm}\bx^{*})\hspace{-1mm} + \hspace{-1mm}\bB(\by^{k+1}\hspace{-2mm}-\hspace{-1mm}\by^{*})
\end{align}
Since the functions $f = f_0 + f_1$ and $g = g_0+g_1$ are strongly convex, we leverage \eqref{eqn:convexity_f} and \eqref{eqn:convexity_g}
and, by plugging the optimality condition \eqref{x_k} and \eqref{y_k}, we have
\begin{align}
&\langle\bA^{T}(\bflambda^{k+1}\hspace{-1mm}-\hspace{-1mm}\blambda^{*})+\beta\bA^T\bA(\bx^{k+1}\hspace{-1mm} -\hspace{-1mm} \bx^{k})\hspace{-1mm} +\hspace{-1mm} \nabla f_1(\bx^{k+1}) \hspace{-1mm}- \hspace{-1mm}\nabla f_1(\bx^k)\nonumber\\
&+ \frac{1}{\alpha_1}\bI(\bx^k-\bx^{k+1}),\bx^{k+1}-\bx^*\rangle\geq v_f\|\bx^{k+1}-\bx^*\|^2\label{x}\\
&\langle\bB^T(\bflambda^{k+1}-\blambda^* + \beta \bA(\bx^{k+1}-\bx^k) +\beta\bB(\by^{k+1}-\by^k)) \nonumber\\
&+ \nabla g_1(\by^{k+1}) - \nabla g_1(\by^k)+\frac{1}{\alpha_2}\bI(\by^k-\by^{k+1}),\by^{k+1}-\by^*\rangle \nonumber\\
&\geq v_g\|\by^{k+1}-\by^*\|^2\label{y}
\end{align}
Next, add \eqref{x} and \eqref{y} together, define $\Phi = v_f\|\bx^{k+1}-\bx^*\|^2+v_g\|\by^{k+1}-\by^*\|^2$ and plug in \eqref{dual_opt2} to obtain:
\begin{align}
&\langle\beta\bA(\bx^{k+1}\hspace{-1mm}-\hspace{-1mm}\bx^k)\hspace{-1mm}+\hspace{-1mm}\blambda^{k+1}\hspace{-1mm}-\hspace{-1mm}\blambda^*,\frac{1}{\beta}(\blambda^k\hspace{-1mm}-\hspace{-1mm}\blambda^{k+1}))\hspace{-1mm}-\hspace{-1mm}\gamma(\blambda^k\hspace{-1mm}-\hspace{-1mm}\blambda^*) \rangle \nonumber\\
+ &\langle\frac{1}{\alpha_1}(\bx^k\hspace{-1mm}-\hspace{-1mm}\bx^{k+1})+\nabla f_1(\bx^{k+1})-\nabla f_1(\bx^k), \bx^{k+1}-\bx^*\rangle \nonumber\\
+& \langle\nabla g_1(\by^{k+1})\hspace{-1mm}-\hspace{-1mm}\nabla g_1(\by^k)+\beta\bB^T\bB(\by^{k+1}\hspace{-1mm}-\hspace{-1mm}\by^k), \by^{k+1}\hspace{-1mm}-\hspace{-1mm}\by^* \rangle \nonumber\\
+&\langle \frac{1}{\alpha_2}(\by^k-\by^{k+1}), \by^{k+1}-\by^*\rangle 
\geq \Phi. \label{cross_all}
\end{align}
We then proceed to split the cross terms and one can notice that there are similar terms for $\bx,\by,\blambda$ in the following form:
\begin{align*}
    &\langle\blambda^{k+1}-\blambda^*,\frac{1}{\beta}(\blambda^k-\blambda^{k+1})\rangle + \langle \frac{1}{\alpha_1}(\bx^k\hspace{-1mm}-\hspace{-1mm}\bx^{k+1}), \bx^{k+1}\hspace{-1mm}-\hspace{-1mm}\bx^*\rangle\\
    &+\frac{1}{\alpha_2}(\by^k-\by^{k+1}), \by^{k+1}-\by^*\rangle 
\end{align*}
We group them together and define the following quantities: 
$$\bG = \left(\begin{matrix}
\frac{1}{\alpha_1}\bI&0&0\\0&\frac{1}{\alpha_2}\bI&0\\0&0&\frac{1}{\beta}\bI
\end{matrix}\right),\bw^k = \left(\begin{matrix}
\bx^k\\\by^k\\ \bflambda^k
\end{matrix}\right) , \,\, \bw^* = \left(\begin{matrix}
\bx^*\\\by^*\\ \bflambda^*
\end{matrix}\right),$$
so that one can rewrite the inequality as follows: 
\begin{align}
&(\bw^{k+1} - \bw^*)^T\bG(\bw^{k} - \bw^{k+1}) + \gamma\langle \bflambda^*-\bflambda^k, \bflambda^{k+1} - \bflambda^*\rangle\nonumber\\
+&\langle \bflambda^{k}\hspace{-1mm} - \hspace{-1mm}\bflambda^{k+1}, \bA(\bx^{k+1}-\bx^{k})\rangle + \beta\gamma\langle \bflambda^*\hspace{-1mm}-\hspace{-1mm}\bflambda^k, \bA(\bx^{k+1}-\bx^{k})\rangle \nonumber\\
+&\langle\beta\bB^T\bB(\by^{k+1} - \by^k),\by^{k+1}-\by^* \rangle\nonumber\\
+&\langle\bx^{k+1}\hspace{-1mm}-\hspace{-1mm}\bx^*,\nabla f_1(x^{k+1})\hspace{-1mm} -\hspace{-1mm} \nabla f_1(\bx^k)\rangle\nonumber\\
+&\langle\by^{k+1}\hspace{-1mm}-\hspace{-1mm}\by^*,\nabla g_1(y^{k+1})\hspace{-1mm} -\hspace{-1mm} \nabla g_1(\by^k)\rangle\hspace{-1mm}\geq \Phi \, . \label{eqn:crossx}
\end{align}
To tackle the cross terms related only to $\blambda$, we consider the following equality
\begin{align}
\|\ba\hspace{-1mm}-\hspace{-1mm}\bc\|^2_\bG\hspace{-1mm} - \hspace{-1mm}\|\bb\hspace{-1mm}-\hspace{-1mm}\bc\|^2_\bG 
\hspace{-1mm}=\hspace{-1mm} 2(\ba\hspace{-1mm}-\hspace{-1mm}\bc)^T\bG(\ba\hspace{-1mm}-\hspace{-1mm}\bb) - \|\ba\hspace{-1mm}-\hspace{-1mm}\bb\|^2_\bG,\label{eqn:fact}
\end{align} 
and use it in \eqref{eqn:crossx} to arrive at the following inequality:
\begin{align}
&(\bw^{k+1} - \bw^*)^T\bG(\bw^{k} - \bw^{k+1}) \geq \frac{\gamma}{2}\|\blambda^{k+1}-\blambda^*\|^2\nonumber\\
-& \frac{\gamma}{2}\|\blambda^{k}\hspace{-1mm}-\hspace{-1mm}\blambda^{k+1}\|^2\hspace{-1mm} + \hspace{-1mm}\frac{\gamma}{2}\|\blambda^{k}\hspace{-1mm}-\hspace{-1mm}\blambda^*\|^2
\hspace{-1mm}+\hspace{-1mm}\langle \bflambda^{k+1}\hspace{-1mm} - \hspace{-1mm}\bflambda^{k}, \bA(\bx^{k+1}\hspace{-1mm}-\hspace{-1mm}\bx^{k})\rangle\nonumber\\
+& \beta\gamma\langle \bflambda^k\hspace{-1mm}-\hspace{-1mm}\bflambda^*\hspace{-1mm}, \bA(\bx^{k+1}\hspace{-1mm}-\hspace{-1mm}\bx^{k})\rangle\hspace{-1mm}+\hspace{-1mm}\langle\bx^{*}\hspace{-1mm}-\hspace{-1mm}\bx^{k+1}\hspace{-1mm},\nabla\hspace{-1mm} f_1(x^{k+1})\hspace{-1mm} -\hspace{-1mm} \nabla\hspace{-1mm} f_1(\bx^k)\rangle\hspace{-1mm} \nonumber\\
+&\langle\by^{*}\hspace{-1mm}-\hspace{-1mm}\by^{k+1},\nabla g_1(y^{k+1})\hspace{-1mm} -\hspace{-1mm} \nabla g_1(\by^k)\rangle\hspace{-1mm}\nonumber\\
+&\langle\beta\bB^T\bB(\by^{k+1} - \by^k),\by^{k+1}-\by^* \rangle+\Phi.\label{eqn:crossy}
\end{align}
Then, we utilize the Cauchy-Schwarz inequality to bound the following term:
\begin{align}
&\langle \bA(\bx^{k+1} -\bx^k),\bflambda^{k+1}-\bflambda^k\rangle \nonumber\\
\geq& -\frac{1}{2\rho_1}\|\bA(\bx^{k+1}\hspace{-1mm}-\hspace{-1mm}\bx^k)\|^2 \hspace{-1mm}-\hspace{-1mm}\frac{\rho_1}{2}\|\bflambda^{k+1} \hspace{-1mm}-\hspace{-1mm}\bflambda^k\|^2, \forall \rho_1 >0\label{cross3}
\end{align}
We implement same process for the rest of cross terms:
\begin{align*}
&\beta\gamma\langle \bA(\bx^{k+1} -\bx^k),\bflambda^k-\bflambda^*\rangle \nonumber\\
\geq& -\frac{\beta\gamma}{2\rho_2}\|\bA(\bx^{k+1}\hspace{-1mm}-\hspace{-1mm}\bx^k)\|^2\hspace{-1mm} -\hspace{-1mm}\frac{\beta\gamma\rho_2}{2}\|\bflambda^{k} \hspace{-1mm}-\hspace{-1mm} \bflambda^*\|^2, \forall \rho_2 >0\label{cross4}\\
&\langle\beta\bB^T\bB(\by^k- \by^{k+1}),\by^{k+1}-\by^* \rangle\nonumber\\
\geq&-\frac{\beta}{\rho_3}\|\bB(\by^{k+1} - \by^{k})\|^2 - \beta\rho_3\|\bB(\by^{k+1}-\by^*)\|^2\nonumber\\
\geq& -\frac{\beta\sigma_{\max}^2(\bB)}{\rho_3}\|\by^{k+1}\hspace{-1mm} -\hspace{-1mm} \by^{k}\|^2-\beta\rho_3\sigma_{\max}^2(\bB)\|\by^{k+1}\hspace{-1mm}-\hspace{-1mm}\by^*\|^2 \, ,
\end{align*}
where $\sigma_{\max}(\bB)$ is the maximal singular value of $\bB$. The remaining terms in~\eqref{eqn:crossy} are related only to gradients of $f_1, g_1$, which can be bounded as follows: 
\begin{align}
&\langle\bx^{*}-\bx^{k+1},\nabla f_1(x^{k+1}) - \nabla f_1(\bx^k)\rangle
\nonumber\\
&+\langle\by^{*}-\by^{k+1},\nabla g_1(y^{k+1}) - \nabla g_1(\by^k)\rangle\nonumber \\
\geq& -\frac{L_f^2}{2\rho_4}\|\bx^{k+1}-\bx^k\|^2 - \frac{\rho_4}{2}\|\bx^{k+1}-\bx^*\|^2 \nonumber\\
&-\frac{L_g^2}{2\rho_5}\|\by^{k+1}-\by^k\|^2 - \frac{\rho_5}{2}\|\by^{k+1}-\by^*\|^2,
\end{align}
where we have used the Cauchy-Schwarz inequality and we leveraged the Lipschitz continuity of $f_1,g_1$. Also, from \eqref{eqn:fact}, it can be noticed that:
\begin{align*}
&\|\bw^k-\bw^*\|^2_\bG - \|\bw^{k+1}-\bw^*\|^2_\bG \\
= ~&2(\bw^k-\bw^*)^TG(\bw^k-\bw^{k+1}) - \|\bw^{k} - \bw^{k+1}\|^2_\bG \, .
\end{align*}
It therefore follows that:
\begin{align*}
&\|\bw^k\hspace{-1mm}-\hspace{-1mm}\bw^*\|^2_\bG \hspace{-1mm}-\hspace{-1mm} \|\bw^{k+1}\hspace{-1mm}-\hspace{-1mm}\bw^*\|^2_\bG  \nonumber\\
\geq& \|\bw^k\hspace{-1mm}-\hspace{-1mm}\bw^{k+1}\|^2_\bG + \gamma\|\blambda^{k+1}-\blambda^*\|^2 \nonumber\\
-& \gamma\|\blambda^{k}-\blambda^{k+1}\|^2 + \gamma\|\blambda^{k}-\blambda^*\|^2\nonumber\\
-&\frac{\sigma_{\max}^2(\bA)}{\rho_1}\|\bx^{k+1}\hspace{-1mm}-\hspace{-1mm}\bx^k\|^2 \hspace{-1mm}-\hspace{-1mm}\rho_1\|\bflambda^{k+1} \hspace{-1mm}-\hspace{-1mm}\bflambda^k\|^2\nonumber\\
-&\frac{\beta\gamma\sigma_{\max}^2(\bA)}{\rho_2}\|\bx^{k+1}\hspace{-1mm}-\hspace{-1mm}\bx^k\|^2\hspace{-1mm} -\hspace{-1mm}\beta\gamma\rho_2\|\bflambda^{k} \hspace{-1mm}-\hspace{-1mm} \bflambda^*\|^2\nonumber\\
-&\frac{L_f^2}{\rho_4}\|\bx^{k+1}-\bx^k\|^2 - \rho_4\|\bx^{k+1}-\bx^*\|^2 \nonumber\\
-&\frac{L_g^2}{\rho_5}\|\by^{k+1}-\by^k\|^2 - \rho_5\|\by^{k+1}-\by^*\|^2+ \Phi\nonumber \\
-&\frac{\beta\sigma_{\max}^2(\bB)}{\rho_3}\|\by^{k+1}\hspace{-1mm} -\hspace{-1mm} \by^{k}\|^2\hspace{-1mm}-\hspace{-1mm}\beta\rho_3\sigma_{\max}^2(\bB)\|\by^{k+1}\hspace{-1mm}-\hspace{-1mm}\by^*\|^2.
\end{align*}
Rearranging the terms in a suitable way, we arrive at the following inequality: 
\begin{align}
&\|\bw^k\hspace{-1mm}-\hspace{-1mm}\bw^*\|^2_\bG \hspace{-1mm}-\hspace{-1mm} \|\bw^{k+1}\hspace{-1mm}-\hspace{-1mm}\bw^*\|^2_\bG 
\nonumber\\
\geq& (\frac{1}{\alpha_1}\hspace{-1mm}-\hspace{-1mm} \frac{\sigma_{\max}^2(\bA)}{\rho_1} \hspace{-1mm}-\hspace{-1mm} \frac{\beta\gamma\sigma_{\max}^2(\bA)}{\rho_2} - \frac{L_f^2}{\rho_4})\|\bx^k-\bx^{k+1}\|^2\nonumber\\
+&(\frac{1}{\alpha_2} - \beta\frac{\sigma_{\max}^2(\bB)}{\rho_3} - \frac{L_g^2}{\rho_5})\|\by^k-\by^{k+1}\|^2\nonumber\\
+&(\frac{1}{\beta}-\gamma-\rho_1)\|\blambda^k-\blambda^{k+1}\|^2+(2v_f-\rho_4)\|\bx^{k+1}-\bx^*\|^2\nonumber\\
+&(2v_g-\rho_5-\beta\rho_3\sigma_{\max}^2(\bB))\|\by^{k+1}-\by^*\|^2 \nonumber\\
+&\gamma\|\blambda^{k+1}-\blambda^*\|^2 + (\gamma-\beta\gamma\rho_2)\|\blambda^k-\blambda^*\|^2\label{cross2}.
\end{align}
Recall that the goal is to prove the following inequality:
\begin{align}
\|\bw^k-\bw^*\|^2_\bG \geq (1+\delta)\|\bw^{k+1}-\bw^*\|^2_\bG,\label{linconv1}
\end{align}
where $\delta>0$ is a constant. For brevity, denote the right-hand-side of \eqref{cross2} as $C$; then it is sufficient to prove that:
\begin{align*}
&C\geq \delta\|\bw^{k+1}\hspace{-1mm}-\hspace{-1mm}\bw^*\|^2_\bG \nonumber\\
&\hspace{-3mm}=\hspace{-1mm}\frac{\delta}{\alpha_1}\|\bx^{k+1}\hspace{-1mm}-\hspace{-1mm}\bx^*\|^2 \hspace{-1mm}+\hspace{-1mm} \frac{\delta}{\alpha_2}\|\by^{k+1}\hspace{-1mm}-\hspace{-1mm}\by^*\|^2\hspace{-1mm}+\hspace{-1mm}\frac{\delta}{\beta}\|\bflambda^{k+1}\hspace{-1mm}-\hspace{-1mm}\bflambda^*\|^2 \, 
\end{align*}
which  requires  the following to hold true:
\begin{align*}
&\frac{1}{\alpha_1} - \frac{\max\sigma^2(\bA)}{\rho_1} - \frac{\beta\gamma\max\sigma^2(\bA)}{\rho_2} - \frac{L_f^2}{\rho_4} \geq 0\\
&\frac{1}{\alpha_2} - \frac{\beta\max\sigma^2(\bB)}{\rho_3} - \frac{L_g^2}{\rho_5}\geq 0,~ \frac{1}{\beta}-\gamma-\rho_1 \geq 0 \\
&2v_f-\rho_4-\frac{\delta}{\alpha_1} \geq 0,~ 2v_g - \rho_5-\beta\rho_3\max\sigma^2(\bB) -\frac{\delta}{\alpha_2} \geq 0\\
&\gamma-\frac{\delta}{\beta} \geq 0,~ \gamma-\beta\gamma\rho_2\geq 0.
\end{align*}
From the inequalities above, one can notice  that the constant $\alpha_1,\alpha_2$ is closely related to various constants as well as the  singular values of $\bA$ and $\bB$, denoted as $\sigma(\bA)$ and $\sigma(\bB)$, respectively. Specifically, this leads to the following conditions for the step sizes:
\begin{align}
&\frac{\delta}{2v_f\hspace{-1mm}-\hspace{-1mm}\rho_4} \leq \alpha_1 \leq \frac{1}{(\frac{1}{\rho_1}\hspace{-1mm}+\hspace{-1mm}\frac{\beta\gamma}{\rho_2})\max\sigma^2(\bA)\hspace{-1mm} + \hspace{-1mm}\frac{L_f^2}{\rho_4}},  \label{eqn:alpha1}\\
&\frac{\delta}{2v_g\hspace{-1mm}-\hspace{-1mm}\rho_5\hspace{-1mm}-\hspace{-1mm}\beta\rho_3\max\sigma^2(\bB)}\leq \alpha_2\leq \frac{1}{\frac{\beta\max\sigma^2(\bB)}{\rho_3} \hspace{-1mm}+\hspace{-1mm} \frac{L_g^2}{\rho_5}} \, . \label{eqn:alpha2}
\end{align}
To ensure that there exists step sizes $\alpha_1,\alpha_2$ that satisfy the condition above, one can choose $\rho_1 = 1, \rho_2 = 1, \rho_3 = \frac{v_g}{2\beta\max\sigma^2(\bB)}, \rho_4 = v_f, \rho_5 = v_g$, and from Assumption \ref{Assum_Bounded},\ref{Assum_uniform} we know we have all problem dependent parameters uniform bounded. 
As for other parameters, one has that: 
\begin{align*}
\frac{1}{\beta}-\gamma-1 \geq 0,~ \gamma-\beta\gamma\geq 0\Rightarrow~ \beta\gamma+\beta \leq 1,~ \beta\leq 1.
\end{align*}
This completes the proof.
\end{proof} 
Using  Lemma~\ref{lem:linear}, we can now proven the result of  Theorem~\ref{main}.  For notation simplicity, we define $r = \frac{1}{1+\delta}$. Using~\eqref{eqn:lin} and the triangle inequality, we have that
	\begin{align*}
	&\|\bw^{(k)} \hspace{-1mm}-\hspace{-1mm} \bw^{*,(k)}\|_\bG \leq r\|\bw^{(k-1)}\hspace{-1mm} - \hspace{-1mm} \bw^{*,(k-1)} \hspace{-1mm}+\hspace{-1mm} \bw^{*,(k-1)} \hspace{-1mm}-\hspace{-1mm} \bw^{*,(k)}\|_\bG \\
	& \leq r\|\bw^{(k-1)}\hspace{-1mm} - \hspace{-1mm} \bw^{*,(k-1)}\|_\bG +  r\|\bw^{*,(k-1)} - \bw^{*,(k)}\|_\bG \, .
	\end{align*} 
We now find a bound on $\|\bw^{*,(k-1)} \hspace{-1mm}-\hspace{-1mm} \bw^{*,(k)}\|_\bG$. 
From \eqref{diff}, \eqref{dual_opt} and the fact that we have chosen $\gamma=1$, we know that 
	   \begin{align*}
		&\bA^{(k+1)}\bx^{*,(k+1)}\hspace{-1mm}-\hspace{-1mm}\bA^{(k)}\bx^{*,(k)}\hspace{-1mm} +\hspace{-1mm} \bB^{(k+1)}\by^{*,(k+1)}\hspace{-1mm}-\hspace{-1mm}\bB^{(k)}\by^{*,(k)}\nonumber\\
		+&\bb^{(k)}-\bb^{(k+1)}+ \blambda^{*,(k+1)} -\blambda^{*,(k)} = 0.
		\end{align*}
Move $\blambda$ terms to the other side of the equation and take norm for both hand sides, it follows:
		\begin{align*}
		&\|\blambda^{*,(k+1)} - \blambda^{*,(k)}\| \\
        \leq& \|\bA^{(k+1)}(\bx^{*,(k+1)} \hspace{-1mm}-\hspace{-1mm} \bx^{*,(k)}) + (\bA^{(k+1)}-\bA^{(k)})\bx^{*,(k)}\| \\
        &+\|\bB^{(k+1)}(\by^{*,(k+1)} \hspace{-1mm}-\hspace{-1mm} \by^{*,(k)}) + (\bB^{(k+1)}-\bB^{(k)})\by^{*,(k)}\|\\
        &+\|\bb^{(k)}-\bb^{(k+1)}\|.
        \end{align*}
        From triangle inequality of norm we know
        \begin{align*}
        &\|\blambda^{*,(k+1)} - \blambda^{*,(k)}\| \\
        \leq& \|\bA^{(k+1)}\|\|\bx^{*,(k+1)} \hspace{-1mm}-\hspace{-1mm} \bx^{*,(k)}\|\hspace{-1mm} +\hspace{-1mm} \|\bB^{(k+1)}\|\|\by^{*,(k+1)}\hspace{-1mm} -\hspace{-1mm} \by^{*,(k)}\| \\ &+ \|\bb^{(k)}-\bb^{(k+1)}\|\hspace{-1mm}+\hspace{-1mm} \|\bA^{(k+1)}-\bA^{(k)}\|\|\bx^{*,(k)}\|\\
		&+\|\bB^{(k+1)}-\bB^{(k)}\|\|\by^{*,(k)}\|
		\end{align*}
		From \eqref{nedic2} we know that 
		\begin{align*}
		   \|\bx^{*,(k)}\| &\leq \sigma_1 + \sqrt{\gamma}\max\|\blambda^{\text{opt}, (k)}\|\cdot c \\
		    &\leq \sigma_1 + \sqrt{\gamma}\mathcal{M}c:= \mathcal{J}(\sigma_1), \\
		    \|\by^{*,(k)}\|&\leq \sigma_2 + \sqrt{\gamma}\max\|\blambda^{\text{opt}, (k)}\|\cdot c\\
		    &\leq \sigma_2 + \sqrt{\gamma}\mathcal{M}c:= \mathcal{J}(\sigma_2).
		\end{align*}
		Combining with Assumption~\ref{Assum_Bounded} we can reach the following inequality
		\begin{align*}
		&\|\blambda^{*,(k+1)} - \blambda^{*,(k)}\| \\
		\leq&\tilde{\sigma}_{\bA}\sigma_\bx+\tilde{\sigma}_{\bB}\sigma_\by+\sigma_\bb+\sigma_\bA\mathcal{x}(\sigma_1) + \sigma_\bB\mathcal{x}(\sigma_2)
		\triangleq \sigma_{\blambda},
	\end{align*}
which gives us 
\begin{align}
	\|\bw^{*,(k-1)} - \bw^{*,(k)}\|_\bG \leq \psi :=  \sqrt[]{\frac{\sigma_\bx^2}{\alpha_1} +\frac{\sigma_\by^2}{\alpha_2} + 2\sigma_{\blambda}^2}
	\end{align}
and the desired result is obtained.

\subsection{Proof of Corollary \ref{linear}}

By recursively applying the  Theorem~\ref{main}, we have that
\begin{align*}
	\|\bw^{(k)} \hspace{-1mm}-\hspace{-1mm} \bw^{*,(k)}\|_\bG 
    & \leq r^k\|\bw^{(0)}-\bw^{*,(0)}\|_\bG \\
    & +  \sum_{i=1}^{(k)}r^{k-i+1}\|\bw^{*,(i-1)}-\bw^{*,(i)}\|_\bG
\end{align*}
where $\|\bw^{*,(k-1)} - \bw^{*,(k)}\|_\bG \leq \psi$.  Taking $k\rightarrow+\infty$, we can derive
	\begin{align*}
	&\lim_{k\rightarrow\infty}\|\bw^{(k)} - \bw^{*,(k)}\|_\bG  \nonumber\\
    \leq&\lim_{k\rightarrow\infty}\left( \frac{r(1-r^{(k)})}{1-r}\psi(\sigma) + r^k\|\bw^{(0)}-\bw^{*,(0)}\|_\bG\right)\\
	\Rightarrow &\lim\sup_{k\rightarrow\infty}\|\bw^{(k)} - \bw^{*,(k)}\|_\bG \leq \frac{r}{1-r}\psi = \frac{\psi}{\delta}.
	\end{align*}
	The desired result is then obtained.

\subsection{Remarks on Corollary 2}
The result is obtained by choosing the biggest step sizes;  that is:
\begin{align*}
    \alpha_1 = \frac{1}{(1+\beta\gamma)\tilde{\sigma^2}_\bA + \frac{\tilde{L}_f^2}{\tilde{v}_f}}, ~\alpha_2 = \frac{1}{\frac{2\beta^2\max\tilde{\sigma}^4_\bB}{\tilde{v}_g}+ \frac{\tilde{L}_g^2}{\tilde{v}_g}}
\end{align*}
Recall that:
\begin{align}
	    \delta &\leq \frac{\tilde{v}_f}{(1+\beta\gamma)\tilde{\sigma}^2_\bA + \frac{\tilde{L}_f^2}{\tilde{v}_f}},~\delta \leq \frac{\tilde{v}_g}{\frac{4\beta^2\tilde{\sigma}^4_\bB}{\tilde{v}_g} + \frac{2\tilde{L}_g^2}{\tilde{v}_g}}.\label{eqn:delta}
\end{align}
We already know that $\delta \leq \beta\gamma$; therefore, based on~\eqref{eqn:alpha1}--\eqref{eqn:alpha2}, one can pick $\delta$ as 
\begin{align*}
\delta = \min \left(\frac{\tilde{v}_f}{(1+\beta\gamma)\tilde{\sigma}^2_\bA + \frac{\tilde{L}_f^2}{\tilde{v}_f}},\frac{\tilde{v}_g}{\frac{4\beta^2\tilde{\sigma}^4_\bB}{\tilde{v}_g} + \frac{2\tilde{L}_g^2}{\tilde{v}_g}}, \beta\gamma \right).
\end{align*}.

\ifCLASSOPTIONcaptionsoff
  \newpage
\fi

\bibliographystyle{IEEEtran}
\bibliography{2020-arxiv}{}

\end{document}